\documentclass[10pt,journal]{IEEEtran}
\usepackage{amsmath, amsthm, amssymb, amsfonts}
\usepackage{textcomp}
\usepackage{graphicx}
\usepackage{cite}

\usepackage{ulem}
\usepackage{algorithmicx}
\usepackage{algorithm}
\usepackage{array}
\usepackage[caption=false,labelfont=sf,textfont=sf]{subfig}
\usepackage[implicit=true,hidelinks]{hyperref}
\usepackage{stfloats}
\usepackage{url}
\usepackage{verbatim}
\usepackage{multirow,booktabs}
\usepackage{xcolor}
\usepackage{makecell}
\usepackage{ragged2e}
\usepackage{threeparttable}
\usepackage{pgfplots}
\usepackage{tikz}
\usepackage{colortbl}
\usepackage{color}
\usepackage{orcidlink}
\usetikzlibrary{arrows}
\usetikzlibrary{math}
\usetikzlibrary{positioning}

\usetikzlibrary{quotes,angles}
\usetikzlibrary{calc}
\usetikzlibrary{decorations.pathreplacing}
\usetikzlibrary{shapes.arrows}
\usepackage{pgfplotstable}
\pgfplotsset{compat=1.7}
\usepackage [
lambda, 
advantage , operators , sets,
adversary , landau , probability , notions ,
logic, ff, mm,
primitives , events , complexity , oracles , asymptotics , keys
]{cryptocode}

\definecolor{seagreen}{rgb}{0.18, 0.55, 0.34}
\definecolor{darkorange}{rgb}{1.0, 0.55, 0.0}
\definecolor{deeppink}{rgb}{1.0, 0.08, 0.58}
\definecolor{blue-violet}{rgb}{0.54, 0.17, 0.89}
\definecolor{brandeisblue}{rgb}{0.0, 0.44, 1.0}
\definecolor{carminered}{rgb}{1.0, 0.0, 0.22}
\newcommand{\ek}{\ensuremath{\mathsf{ek}}}

\newcommand{\C}{\ensuremath{\mathsf{C}}}

\newcommand{\M}{\ensuremath{\mathsf{M}}}
\newcommand{\calS}{\ensuremath{\mathcal{S}}}

\newtheorem{definition}{Definition}
\newtheorem{theorem}{Theorem}

\definecolor{color1}{RGB}{0,114,178} 
  \definecolor{color2}{RGB}{213,94,0}   
  \definecolor{color3}{RGB}{0,158,115}  
  
\def\BibTeX{{\rm B\kern-.05em{\sc i\kern-.025em b}\kern-.08em
    T\kern-.1667em\lower.7ex\hbox{E}\kern-.125emX}}

\begin{document}

\title{Silent Guardians: Independent and Secure Decision Tree Evaluation Without Chatter}

\author{Jinyuan Li$^{\orcidlink{https://orcid.org/0009-0007-7703-8428}}$ and 
Liang Feng Zhang$^{\orcidlink{https://orcid.org/0000-0003-3543-1524}}$
\thanks{
  Jinyuan Li and Liang Feng Zhang are with the School of Information Science and Technology, 
  ShanghaiTech University, Shanghai 201210, China. E-mail: {lijy6, zhanglf}@shanghaitech.edu.cn.
  }
}

\markboth{Journal of \LaTeX\ Class Files,~Vol.~14, No.~8, August~2021}%
{Shell \MakeLowercase{\textit{et al.}}: A Sample Article Using IEEEtran.cls for IEEE Journals}

\maketitle

\begin{abstract}
As machine learning as a service (MLaaS) gains increasing popularity, it raises two critical challenges: privacy and verifiability. 
For privacy, clients are reluctant to disclose sensitive private information to access MLaaS, while model providers must safeguard their proprietary models. 
For verifiability, clients lack reliable mechanisms to ensure that cloud servers execute model inference correctly.
Decision trees are widely adopted in MLaaS due to their popularity, interpretability, and broad applicability in domains like medicine and finance. 
In this context, outsourcing decision tree evaluation (ODTE) enables both clients and model providers to offload their sensitive data and decision tree models to the cloud securely. 
However, existing ODTE schemes often fail to address both privacy and verifiability simultaneously. 
To bridge this gap, we propose $\sf PVODTE$, 
a novel two-server private and verifiable  ODTE protocol that leverages homomorphic secret sharing and a MAC-based verification mechanism. 
$\sf PVODTE$ eliminates the need for server-to-server communication, enabling independent computation by each cloud server.  
This ``non-interactive'' setting 
addresses the latency and synchronization bottlenecks of prior arts, 
making it uniquely suitable for wide-area network (WAN) deployments. 
To our knowledge, $\sf PVODTE$ 
is the first two-server ODTE protocol that eliminates server-to-server communication. 
Furthermore, $\sf PVODTE$ achieves security against \emph{malicious} servers, 
where servers cannot learn anything about the client's input or the providers' decision tree models, 
and servers cannot alter the inference result without being detected. 
\end{abstract}
\begin{IEEEkeywords}
Cloud security, Decision tree, Cloud computing
\end{IEEEkeywords}

\section{Introduction}
\IEEEPARstart{M}{achine} Learning as a Service (MLaaS) enables access to advanced machine learning models without requiring extensive in-house resources or expertise. 
Driven by cloud computing, MLaaS has evolved into a paradigm~\cite{zheng2022} where a model provider (e.g., a hospital) deploys a pre-trained model (e.g., a diagnostic decision tree) to the cloud, offering inference services to clients (e.g., patients) who submit sensitive input data (e.g., health records). Outsourcing MLaaS to the cloud delivers well-documented benefits for both parties: high scalability, ubiquitous access, and low economical cost.

However, this paradigm raises two fundamental security
challenges~\cite{riasi2024}: 
(i) Privacy: Model providers (e.g., hospitals)
  hesitate to outsource proprietary diagnostic models to cloud
  service providers due to risks of intellectual property
exposure; clients (e.g., patients) fear privacy violations when
transmitting plaintext sensitive data (e.g., health records). 
(ii)
Verifiability: Untrusted cloud servers may return incorrect
inference results (e.g., manipulated medical diagnoses), but
clients lack mechanisms to verify result correctness.

To address both privacy and verifiability in MLaaS, privacy-preserving and verifiable model inference (PVMI) 
has emerged as a key research direction. 
PVMI protocols guarantee two core properties: 
(i) cloud servers cannot access the provider's proprietary model or client's sensitive input in plaintext; 
(ii) malicious server behavior (e.g., result manipulation) is detectable.
PVMI solutions are broadly categorized into single-server and multi-server paradigms based on the number of cloud servers involved.

In the single-server setting, 
the provider's encrypted model and client's encrypted input are sent to one cloud server, which performs encrypted inference and returns an encrypted result.
Prior works~\cite{riasi2024,Weng2023,lu2018,aloufi2021} use additively homomorphic encryption (AHE)~\cite{cramer1997} or fully homomorphic encryption (FHE)~\cite{gentry2009} to outsource models like 
neural networks and decision trees. 
For verifiability, some works~\cite{riasi2024,Weng2023} integrate zero-knowledge proofs (ZKPs)~\cite{groth2016a}, 
while others~\cite{lu2018,aloufi2021} lack such mechanisms, leaving clients unable to verify the correctness of inference results. 
However, single-server designs suffer from \emph{prohibitive computational overhead} due to AHE/FHE, limiting their real-world practicality~\cite{dong2023}.

To mitigate these computational bottlenecks, multi-server architectures use secret sharing (SS)~\cite{shamir1979a,beaver1992,Cramer2005}, where 
the model and input are split into shares, and then distributed across two or more non-colluding servers~\cite{liu2023,he2024,tsuchida2020,ji2023,cheng2024,liu2019,ma2021,zheng2022,zheng2023}. 
SS-based protocols replace expensive homomorphic operations with lightweight arithmetic, improving efficiency. 
However, they introduce new challenges:
high network latency and 
excessive bandwidth consumption due to frequent server-to-server (S2S) communication (e.g., data exchange between servers) during inference, especially in wide-area networks (WANs).
Moreover, existing SS-based PVMI protocols~\cite{liu2023,he2024,tsuchida2020,ji2023,cheng2024,liu2019,ma2021,zheng2022,zheng2023} 
only satisfy privacy but lack verifiability, clients  cannot detect malicious server behavior like result manipulation.

In this work, we focus on a specific type of machine learning model: decision trees, 
which are popular, interpretable, and widely used in applications like medical diagnosis~\cite{liang2021,zhang2023}, financial distress prediction~\cite{qian2022}, and more~\cite{Francescomarino2019,zhengy2022}. 
Specifically, we target the \emph{outsourced decision tree evaluation} (ODTE) problem: securely outsourcing a decision tree model and the client input to the cloud, with inference performed on encrypted or secret-shared data 
such that the model and input remain private, and the result is  verifiable.

Prior work on ODTE prioritizes privacy (i.e., hiding the model and the client input from servers). 
According to the number of required servers, existing ODTE protocols can be classified into three categories:
(i) \emph{single-server} protocols~\cite{lu2018,aloufi2021} that minimize server requirements but suffer from prohibitive computational overhead due to the reliance on FHE;
(ii) \emph{two-server} protocols~\cite{liu2019,ma2021,zheng2022,zheng2023} that retain a relatively simple setup but  
require $\mathcal{O}(h)$, $\mathcal{O}(\secpar)$ or $\mathcal{O}(\log (\secpar))$ S2S rounds, 
where $h$ is the height of the decision tree and 
$\secpar$ is the security parameter, leading to high network latency; 
(iii) \emph{three-server} protocols~\cite{tsuchida2020,ji2023,cheng2024} that reduce S2S rounds to constant but increase deployment complexity (three cloud servers).
Unfortunately, all existing ODTE protocols suffer from two critical limitations: 
(i) \emph{Performance bottlenecks}: They suffer from either significant computational overhead (single-server designs) or high network latency (multi-server designs);
(ii) \emph{Weak security guarantees}: No protocol provides security against \emph{malicious} servers. Malicious servers may compromise the privacy of clients and model providers, or manipulate inference results.

These gaps motivate our core research question:
\emph{Can we design an ODTE protocol that simultaneously achieves three goals: (i) maintaining a practical two-server setup to avoid FHE; (ii) eliminating S2S rounds to minimize network latency; and (iii) achieving security against malicious servers to ensure privacy and verifiability?}

\begin{table}[t]
  \centering
  \renewcommand{\arraystretch}{1.3}
  \caption{
    Comparison of ODTE protocols.
  }
    \resizebox{1\columnwidth}{!}{
      \begin{threeparttable}
        
    \begin{tabular}{l||c|c|c|c}
      \hline
      { Protocol}  & {\#Servers}
    &
     \makecell[c]{ { Malicious} \\
     { security}  } &
        { \#S2S Rounds} & {Techniques}
        \\ 
         \hline
        \cite{aloufi2021}   & $1$ &  $\circ$  & $\bf 0$ & FHE     \\
        \hline
        \cite{lu2018}       & $1$ &   $\circ$  & $\bf 0$  & FHE   \\ 
        \hline \hline
        \cite{tsuchida2020} & $3$  & $\circ   $  & $25$ & RSS     \\
        \hline
        \cite{cheng2024}    & $3$  & $\circ   $  & $5$ & FSS, RSS      \\
        \hline
        \cite{ji2023}       & $3$  & $\circ   $  & $4$  & FSS    \\
        \hline \hline
        \cite{liu2019}      & $2$  & $\circ   $  & $h+3$ & SS, HE    \\
        \hline 
        \cite{ma2021}       & $2$ & $\circ $  & $2h$ &SS, OT    \\
        \hline  
        \cite{xu2023}     & $2$ & $\circ   $  & $2(h+1)$ & FHE   \\
        \hline 
        \cite{zheng2022}    & $2$ & $\circ   $  & $2 (\secpar-1)$ & SS  \\
        \hline 
        \cite{zheng2023}    & $2$  & $\circ   $  & $\log (\secpar)+5$ & SS, OT   \\
        \hline 
        Ours                & $2$  & $\bullet $  & $\bf 0$ & HSS     \\
       \hline
      \end{tabular}
      Remark: 
        A protocol that possesses a particular property in a given column is marked with a $\bullet$; 
        otherwise, it is marked with a $\circ$. 
        \#Servers denotes the number of required servers. 
        \#S2S Rounds denotes the rounds of S2S communication.
        $h$ denotes the height of the decision tree, 
        $\secpar$ denotes the security parameter.
    \end{threeparttable}
    }
    \vspace{-4mm}
\label{table:comp_theoretically}
\end{table}

\subsection{Our Contributions}

In this paper, we propose $\sf PVODTE$. To our knowledge, this is the first two-server ODTE protocol that \emph{eliminates S2S rounds} while simultaneously achieving security against \emph{malicious} servers. A detailed comparison with existing ODTE protocols is provided in TABLE \ref{table:comp_theoretically}. Furthermore, we demonstrate the practical applicability of $\sf PVODTE$ on gradient boosted decision trees and features with categorical values.

Our key tool to obtain $\sf PVODTE$ is a \emph{new} secure integer comparison algorithm designed for two non-communicating servers. 
This algorithm enables efficient integer comparison without any S2S rounds, 
while guaranteeing that neither server learns the plaintexts of the inputs or the comparison result. 
To our knowledge, this is the \emph{first} such algorithm tailored for 
two non-communicating servers.  
Compared with existing secure integer comparison algorithms, 
our algorithm retains the $\mathcal{O}(t)$ multiplication gate complexity (where $t$ is the bit length of the inputs) while eliminating the need for S2S rounds. 
Detailed performance comparisons are provided in TABLE \ref{table:sic}. 
We further extend this algorithm to support secure floating-point number comparison. 
Notably, the secure integer comparison algorithm is not only critical for decision tree evaluation  
but also useful in other applications~\cite{Damg2007,Gentry2015}.

We implement a proof-of-concept system for $\sf PVODTE$ to validate its practicality and efficiency. 
We conduct empirical evaluations using decision trees of practical sizes (consistent with real-world deployments) to benchmark our protocol's performance. 
Experimental results show that our protocol achieves a $20\times$ improvement in total communication efficiency over the state-of-the-art ODTE protocol~\cite{zheng2023}. 
Additionally, our protocol's overall computation speed is $2.7\times$ 
faster than that in~\cite{zheng2023}. 
The detailed comparison results are presented in TABLE \ref{tab:serverlatency} and TABLE \ref{tab:online communication}.
The source code is publicly available at \url{https://github.com/neuroney/PODT}.

\begin{table}[t]
  \centering
  \renewcommand{\arraystretch}{1.3}
  \caption{Comparison of Two-server Secure Integer \\ Comparison algorithms.}
    \resizebox{1\columnwidth}{!}{
    \begin{tabular}{c||c|c|c}
      \hline
      Algorithms &  \#S2S Rounds & \#Mult. & \#Comm. \\
      \hline
      \cite{zheng2022} & $2 t-2$ & $3t-5$ & $12t^2-10t$  \\
      \hline
      \cite{zheng2023} & $\log t+1$ & $3t-5$ & $12t^2-10t$  \\
      \hline
      \cite{Damagrd2006} & $44$ & $205t+188\log t$ & $410t^2+376t^2\log t$  \\
      \hline
      \cite{Nishide2007} & $15$ & $279t+5$ & $558t^2+10t$  \\
      \hline
      \cite{Reistad2007} & $8$ & $27t+36t+5$ & $54t^2+72t\log t+10t$  \\
      \hline
      \cite{Sutradhar2023} & $23$ & $10$ & $84t$  \\
      \hline
      Ours & $\bf 0$ & $4t-2$  & $\bf 0$  \\
      \hline
      \end{tabular}
      }
      \begin{justify}
        Remark: $t$ denotes the number of bits of an integer compared. \#S2S Rounds denotes the number of rounds of S2S communication. 
        \#Mult. denotes the number of multiplication gates.
        \#Comm. denotes the number of bits of communication.
      \end{justify}
\label{table:sic}
\vspace{-8mm}
\end{table}

$\sf PVODTE$ targets privacy-preserving MLaaS where model owners and clients are offline. 
The two non-colluding servers can be instantiated as organizationally independent cloud providers (e.g., AWS and Azure), making collusion unlikely in practice. 
By eliminating online S2S rounds, 
$\sf PVODTE$ excels in high-throughput WAN environments with a large number of queries, 
effectively amortizing the $\mathcal{O}(mn)$ one-time download cost ($m$ non-leaf nodes; $n$ features), though is less efficient for sporadic queries or LAN settings. 
For deployments where the one-time downloading is unacceptable such as stateless clients, an alternative instantiation over RLWE-based \textsf{HSS}~\cite{boyle2019} can eliminate it entirely, at the cost of increased server-side computation; we discuss this variant in detail in the supplementary material.

\subsection{Our Designs}
\label{sec:design}

Our $\sf PVODTE$ relies on two core building blocks: 
homomorphic secret sharing (HSS)~\cite{boyle2019,boyle2016,orlandi2021,abram2022} and message authentication codes (MACs). 
HSS is selected for enabling non-interactive private decision tree evaluation across two non-communicating servers, while MACs are adopted for generating verifiable evaluation results. 

Following~\cite{kiss2019}, 
the procedure for decision tree evaluation in ODTE consists of three main algorithms: secure feature selection ($\sf SFS$), secure integer comparison ($\sf SIC$), and secure result generation ($\sf SRG$).  
In $\sf PVODTE$, we use HSS to instantiate these three algorithms, 
enabling a non-interactive privacy-preserving evaluation process.
To achieve verifiability, we augment the $\sf SRG$ algorithm with MACs, yielding the fourth algorithm: verifiable result generation 
($\sf VRG$).

\noindent{\bf Secure feature selection.}
The goal of secure feature selection is to obliviously select, 
for each of the $m$ decision nodes, a corresponding feature from the client's $n$-dimensional feature vector $\mathbf{x}$. 
This selection follows the decision tree's feature map $\delta: [m] \to [n]$, where $\delta(j)$ indexes the feature for the $j$-th decision node. 
To represent $\delta$ in a form amenable to HSS evaluation, the model provider encodes $\delta$ as a binary matrix $\mathcal{M} \in \{0,1\}^{m \times n}$, where each row $\mathcal{M}_j$ contains exactly one non-zero entry at column $\delta(j)$, so that $x_{\delta(j)} = \mathcal{M}_j \cdot \mathbf{x}$. The model provider encrypts $\mathcal{M}$ under HSS to obtain $\C_{\mathcal{M}}$, which is published as a public parameter accessible to all clients. Each client downloads $\C_{\mathcal{M}}$ once as a one-time setup cost; for deployments where this cost is unacceptable, we describe an alternative instantiation that eliminates the download entirely in the supplementary material.
Given a feature vector $\mathbf{x}$, the client runs the $\sf SFS$ algorithm to locally compute an $m \times t$ matrix $\C_{\mathcal{M}\mathbf{x}}$, where each entry $\C_{(\mathcal{M}\mathbf{x})_{j,i}}$ is an HSS ciphertext encrypting the $i$-th bit of $x_{\delta(j)}$. This bit-decomposed representation directly serves as input to the subsequent $\sf SIC$ step. The client then sends $\C_{\mathcal{M}\mathbf{x}}$ to the two cloud servers. The semantic security of HSS guarantees that the client learns nothing about the feature map $\delta$ from $\C_{\mathcal{M}}$, while the servers obtain no information about the feature indices or values from $\C_{\mathcal{M}\mathbf{x}}$.

\noindent\textbf{Secure integer comparison.}
The secure integer comparison step compares the selected feature value $x_{\delta(j)}$ against the threshold $y_j$ (for the $j$-th decision node). 
To enable non-interactive secure comparison, 
we adopt an iterative approach 
to construct the comparison result. 
Specifically, 
let $\alpha = \sum_{i=1}^t 2^{i-1} \cdot \alpha_i$ and $\beta = \sum_{i=1}^t 2^{i-1} \cdot \beta_i$ be two $t$-bit integers, where $\alpha_i, \beta_i \in \{0,1\}$ represent their $i$-th binary bits (with $\alpha_1$ and $\beta_1$ be the least significant bits). 
The $\sf SIC$ algorithm iteratively constructs the comparison result by evaluating the difference of substrings. 
Define $c_0 = 0$, and let $c_i, 1\leq i \leq t$ represent the result of comparing the $i$-bit substrings $\alpha_i \dots \alpha_1$ and $\beta_i \dots \beta_1$ (i.e., whether $\alpha_i \dots \alpha_1 > \beta_i \dots \beta_1$). The value of $c_{i+1}$ is recursively defined as:
\begin{equation}
c_{i+1}:=\  (\alpha_{i+1}>\beta_{i+1})\vee[(\alpha_{i+1}=\beta_{i+1})\wedge c_i], 0\leq i < t, \label{eq:1}
\end{equation}
where $c_t$ yields the final comparison result $(\alpha > \beta)$. 
In $\sf SIC$ algorithm, we transform Eq.~\eqref{eq:1} into a restricted multiplication straight-line (RMS) program~\cite{cleve1990}, 
which can be efficiently evaluated by HSS.
$\sf SIC$ ensures that neither server learns any information about the compared integers ($\alpha, \beta$) or the  result. 
Thus, the privacy of $x_{\delta(j)}$, $y_j$, and their comparison result is guaranteed against the cloud servers.

\noindent{\bf Secure Result Generation.}  
After obtaining the compared result at each decision node, 
the inference result is generated using the ``path costs'' methodology from~\cite{tai2017}. 
Briefly speaking, each leaf node $L_j$ of the decision tree is assigned a path cost $pc_j$, which is determined by the comparison results of the decision nodes along the path to $L_j$. 
The correct inference result $\mathcal{T}({\bf x})$ (where $\mathcal{T}$ denotes the decision tree model) is the classification label $v_j$ of leaf node $L_j$ if and only if $pc_j = 0$. 
Since the path cost calculation contains only linear operations, 
we design a $\sf SRG$ algorithm to compute path costs via HSS. 
We further employ a random masking technique~\cite{wu2015} to hide classification labels, ensuring the client only learns the correct inference result $\mathcal{T}(\mathbf{x}) = v_j$ (corresponding to $pc_j = 0$) without gaining any information about other labels.

\noindent{\bf Verifiable Result Generation.}  
To enable the client to verify the correctness of the inference result $\mathcal{T}(\mathbf{x})$, 
we design a MAC-based verification mechanism and extend the $\sf SRG$ algorithm to a $\sf VRG$ algorithm. 
The mechanism consists of three key steps:
(i) MAC Key Distribution: The client generates a MAC key $A$, encrypts $A$ using HSS to produce ciphertext $\C_A$, and distributes $\C_A$ to the two servers. The security of HSS ensures that the servers cannot extract $A$ from $\C_A$, keeping $A$ secret to the client. 
(ii) Proof Generation: During server-side computation, we extend $\sf SRG$ with a verifiable label generation step. 
For each leaf node $L_j$, 
the servers compute a proof $w_j$ such that 
$w_j = A \cdot v_j$ (where $v_j$ is $L_j$'s classification label)
via HSS, using the encrypted MAC key $\C_A$ and encrypted label $\C_{v_j}$. 
(iii) Result Verification: Upon receiving proofs from the servers, the client verifies the correctness of $\mathcal{T}(\mathbf{x}) = v_j$ by checking if $A \cdot v_j = w_j$.
Such MAC-based verification mechanism is secure against a single malicious server, and is widely used to construct maliciously secure MPC protocols~\cite{Damagrd2012}. 
For collusion-resistant security, combining HSS with ZKPs is a promising direction~\cite{Choudhuri2024}, and left for future work.

\section{Related Work}
Existing ODTE protocols primarily rely on FHE \cite{lu2018,xu2023,aloufi2021} 
and SS \cite{liu2019,ma2021,zheng2023,zheng2022,tsuchida2020,ji2023,cheng2024}. 
The optimal choice depends on the network environments, as these primitives often have a trade-off regarding the number of required cloud servers, security guarantees, computation cost and communication cost. 
Our focus is to construct a two-server ODTE protocol that is free of S2S rounds, 
and achieves security against \emph{malicious} servers, 
which is suitable in WAN settings.

Several studies \cite{lu2018,aloufi2021}  
have sought to construct single-server ODTE protocols, 
where both the encrypted decision tree model and the encrypted client's input data 
are sent to a single server. 
While these protocols eliminate the need for S2S rounds by requiring only a single cloud server, 
they require the full power of FHE, 
which incurs significant computational overhead.
Additionally, these protocols fail to protect the client's feature indices from the cloud servers, thereby allowing the server to identify the type of data evaluated at each decision tree node.
Such exposure enables the cloud servers to infer more about the decision tree model \cite{kiss2019}, 
compromising the intellectual property of the model provider.

As opposed to the single-server approach, 
another line of research has proposed two-server ODTE protocols 
to reduce the computation overhead by distributing the computations among two servers \cite{zheng2022,xu2023,liu2019,ma2021,zheng2023}.
However, these protocols typically 
require $\mathcal{O}(h)$, $\mathcal{O}(\secpar)$ or $\mathcal{O}(\log (\secpar))$ S2S rounds,
where $h$ is the height of the decision tree and 
$\secpar$ is the security parameter.
Liu et al. \cite{liu2019} leveraged secret sharing and additively homomorphic encryption 
to construct an ODTE protocol with $h+3$ communication rounds 
by using Beaver's triplets for multiplication. 
Ma et al. \cite{ma2021} constructed an ODTE protocol 
by combining heavy techniques like key management, conditional Oblivious Transfer (OT) and garbled circuits, 
which necessitates $2h$ S2S rounds.
Xu et al. \cite{xu2023} distributed single-server protocol \cite{lu2018} to 
two unbalanced servers by using multi-key FHE schemes.
In their protocol, one of the servers holds the decryption key 
which can learn the comparison result at each level of the decision tree, 
resulting in $2(h+1)$ S2S rounds 
but more information leakage compared with others. 
Zheng et al. \cite{zheng2022} achieved more lightweight computation by 
replacing the additively homomorphic encryption approach of Tai et al. \cite{tai2017} with 
additive-sharing counterparts (and Beaver's triplets) 
using known secure computation tricks \cite{decock2017} in ideal network settings. 
However, they require $2(\secpar-1)$ S2S rounds. After the optimizations in \cite{zheng2023}, 
they still require $\log (\secpar)+5$ S2S rounds.

Recently, several three-server ODTE protocols have been proposed 
to achieve constant S2S rounds using 
replicated secret sharing (RSS) schemes \cite{tsuchida2020,cheng2024} and 
function secret sharing (FSS) schemes \cite{ji2023}. 
Tsuchida et al. \cite{tsuchida2020} introduced the first 
constant-round ODTE protocol, requiring 25 communication rounds. 
Cheng et al. \cite{cheng2024} reduced this to 5 rounds, and 
Ji et al. \cite{ji2023} further optimized it to 4 rounds. 
However, Cheng et al. \cite{cheng2024} claimed that 
protocol \cite{ji2023} 
achieves a weaker functionality than \cite{tsuchida2020,cheng2024}. Specifically, 
\cite{ji2023} only supports interval checks rather than 
general less-than comparisons and equality checks.

While these multi-server ODTE protocols aim to reduce S2S rounds, 
they still require {\em multiple S2S rounds} and assume that servers are \emph{semi-honest}.
Moreover, 
the protocol of \cite{liu2019} does not protect the feature indices from servers, and 
the protocol of \cite{xu2023} does not protect the comparison results from one of the servers. 
TABLE \ref{table:comp_theoretically} summarizes these ODTE protocols. 
In contrast, the two-server ODTE protocol in this paper 
protects both the feature indices and the comparison results from all servers. 
Furthermore, our protocol eliminates the need for S2S rounds
and achieves security against \emph{malicious} servers.

To achieve security against malicious servers, a common approach is to leverage general-purpose maliciously secure multi-party computation (MPC) protocols \cite{Damagrd2012} in place of the semi-honest SS schemes used in \cite{zheng2022,xu2023,liu2019,ma2021,zheng2023}. 
However, such generic transformations are not straightforward; they typically necessitate extensive additional rounds of S2S communication and incur heavy computational overhead, which is often unsuitable for latency-sensitive WAN settings.

In the context of non-outsourced private decision tree evaluation (PDTE), several studies have explored malicious security with different cryptographic techniques. For instance, Tai et al.~\cite{tai2017} and Wu et al.~\cite{wu2015} leverage conditional Oblivious Transfer (OT) and Proofs of Knowledge (PoK) to ensure the integrity of the evaluation process. More recently, Bai et al. \cite{bai2023} achieved malicious security by utilizing Consistent RSS, which detects inconsistencies among servers.
Despite these advances, many PDTE protocols~\cite{bost2015,cong2022, yuan2024,akhavan2023,cong2024}, remain focused on the semi-honest model to maintain efficiency. 
Note that, these protocols require both the client and the provider to have sufficient computational 
resource and stay online during the evaluation process.

\section{Preliminaries}
\label{sec:prel}

For any integer $n> 0$, we denote $[n] = \{1,\dots, n\}$. 
We denote by $\ZZ$ the set of all integers and by $\NN$ the set of all  non-negative integers.
Let $\secpar \in \NN$ be a security parameter. 
We say that a function $\epsilon(\secpar)$ is
{\em negligible}  in $\lambda$ and denote $\epsilon={\sf negl}(\lambda)$ if
$\epsilon(\lambda)=o(\lambda^{-c})$ for any constant $c>0$.
For any finite set $\calS$, we denote  by ``$s \leftarrow \calS$''  the process of
choosing an element $s$ uniformly
from $\calS$. For any algorithm $\mathcal{A}$, we denote by
``$y\leftarrow \mathcal{A}(x)$'' the process of running $\mathcal{A}$ on an input $x$ and assigning the output to
$y$.
We denote that
$({\langle x \rangle}_0,{\langle x \rangle}_1) \in\mathbb{N}^2$ is a {\em subtractive sharing} of $x$ if
${\langle x \rangle}_1 - {\langle x \rangle}_0 = x $. 
For a relation $R$, $(R)$ denotes $1$ if $R$ is true and $0$ otherwise. For example, $(\alpha>\beta)$ denotes $1$ if $a>b$ and $0$ otherwise.

\subsection{Homomorphic Secret Sharing}
\label{sec:HSS_Paillier}
Homomorphic secret sharing (HSS) \cite{boyle2016} allows two servers
(denoted by $\mathcal{S}_0,\mathcal{S}_1$) to
non-interactively evaluate functions on secret-shared data obtaining secret-shared results. 
The set of supported functions is usually restricted to a class $\mathcal{P}$. 
To some extent, HSS can be viewed as a distributed variant of homomorphic encryption but is roughly one order of magnitude faster.
Following  \cite{boyle2016}, an HSS scheme ${\sf HSS}=\mathsf{(Setup,Input,Eval)}$ for a class $\mathcal{P}$  of programs that have an input space
$\mathcal{I}$ and output space $R$ consists of three $\ppt$ algorithms
 with the following syntax:
\begin{itemize}
  \item $({\sf pk}, ({\sf ek_0,ek_1}))\leftarrow \mathsf{Setup}(\secparam)$:
  Given a security parameter $\secpar$, this algorithm generates a  {\em public key}
  $\pk$, which will be used by a client to encrypt its input, and a pair $(\ek_0,\ek_1)$ of
  {\em private evaluation keys}, which will be given to the servers  to perform  local computations.
  \item $\C_x\leftarrow \mathsf{Input}(\pk,x)$:
  Given   $\pk$ and a private input  $x\in \mathcal{I}$,
  this algorithm outputs a ciphertext $\C_x$.
  \item $\langle y \rangle_\sigma \leftarrow \mathsf{Eval}(b,\ek_\sigma,(\C_{x_1},\dots,\C_{x_\rho}),P) $:
  Given      a program $P \in \mathcal{P}$ and
  $\rho$ ciphertexts $(\C_{x_1},\dots,\C_{x_\rho})$, the $\sigma$th server ($\sigma\in \bin$) will
  use its evaluation key $\ek_\sigma$ to compute a share $\langle y \rangle_\sigma \in R$
  of the program's  output $y=P(x_1,\ldots,x_\rho) \in R$
  such that $\langle y \rangle_1 - \langle y \rangle_0 =y$.
\end{itemize}

An HSS scheme for a class $\mathcal{P}$ of programs
should satisfy the properties of correctness and security.
The property of {\em correctness} requires that
if all algorithms of $\sf HSS$ are correctly executed,
then $\langle y \rangle_1 - \langle y \rangle_0 =y$ with probability $1-\negl$.
The property of {\em security} requires that any $\ppt$ adversary that
controls a server (say the $b$th server $\calS_b$) cannot learn any information
about an input $x$ from ${\sf C}_x$.

\subsection{RMS Programs}
\label{sec:rms}

Many existing HSS schemes \cite{boyle2019,boyle2016,orlandi2021,abram2022} support the homomorphic evaluation of 
restricted multiplications straight-line (RMS) programs \cite{cleve1990}. 
RMS programs can represent arithmetic circuits with a specific constraint: multiplication gates must operate on an {\em input value} and a {\em memory value}. Within the HSS framework, for a plaintext $x$, an input value of $x$ refers to the HSS ciphertext $\C_x$, while a memory value of $x$ refers to the intermediate value $\M_{x,\sigma},\sigma\in \bin$. 
The HSS framework encompasses five fundamental instructions for RMS program evaluation, which are described as follows:
\begin{itemize}
  \item $\mathsf{ConvertInput}(\ek_\sigma,\C_x)$: 
  Given an evaluation key $\ek_\sigma$, 
  convert any ciphertext $\C_x$ into a memory value ${\sf M}_{x,\sigma}.$
  \item $\mathsf{Add}(\M_{x,\sigma},\M_{y,\sigma})$:
  Given two memory values $\M_{x,\sigma}$ and $\M_{y,\sigma}$,
  output a memory value $\M_{z,\sigma}$ for $z=x+y$.
  \item $\mathsf{Add}(\C_x,\C_y)$:
  Given two ciphertexts  $\C_x$ and $\C_y$,
  output a ciphertext $\C_z$ for $z=x+y$.
  \item $\mathsf{Mul}(\C_x,\M_{y,\sigma})$:
  Given a ciphertext $\C_x$ and a memory value $\M_{y,\sigma}$,
  output a memory value $\M_{z,\sigma}$ for $z=x\cdot y$.
  \item $\mathsf{Output}(\M_{x,\sigma},n_{\mathsf{out}})$:
  Given a memory value $\M_{x,\sigma}$ and an output modulus $n_{\mathsf{out}}$,
  output ${\langle x \rangle}_\sigma \bmod n_{\sf out}$.
\end{itemize}

Given that RMS inherently supports addition operations between two memory values, we extend this instruction set with two additional instructions: subtraction (${\sf HSS.Sub}$) between two memory values and constant multiplication (${\sf HSS.cMul}$), which can be described as follows: 
\begin{itemize}
  \item $\mathsf{Sub}(\M_{x,\sigma},\M_{y,\sigma})$:
  Given two memory values $\M_{x,\sigma}$ and $\M_{y,\sigma}$,
  output a memory value $\M_{z,\sigma}$ for $z=x-y$.
  \item $\mathsf{cMul}(c,\M_{y,\sigma})$:
  Given an integer $c$ and a memory value $\M_{y,\sigma}$,
  output a memory value $\M_{z,\sigma}$ for $z=c\cdot y$.
\end{itemize} 
These supplementary instructions play a crucial role in our ODTE protocol.

\section{Problem Statement}
\label{sec:pbst}

\begin{figure}[t]
  \centering
  \includegraphics[width=\columnwidth,height=5cm]{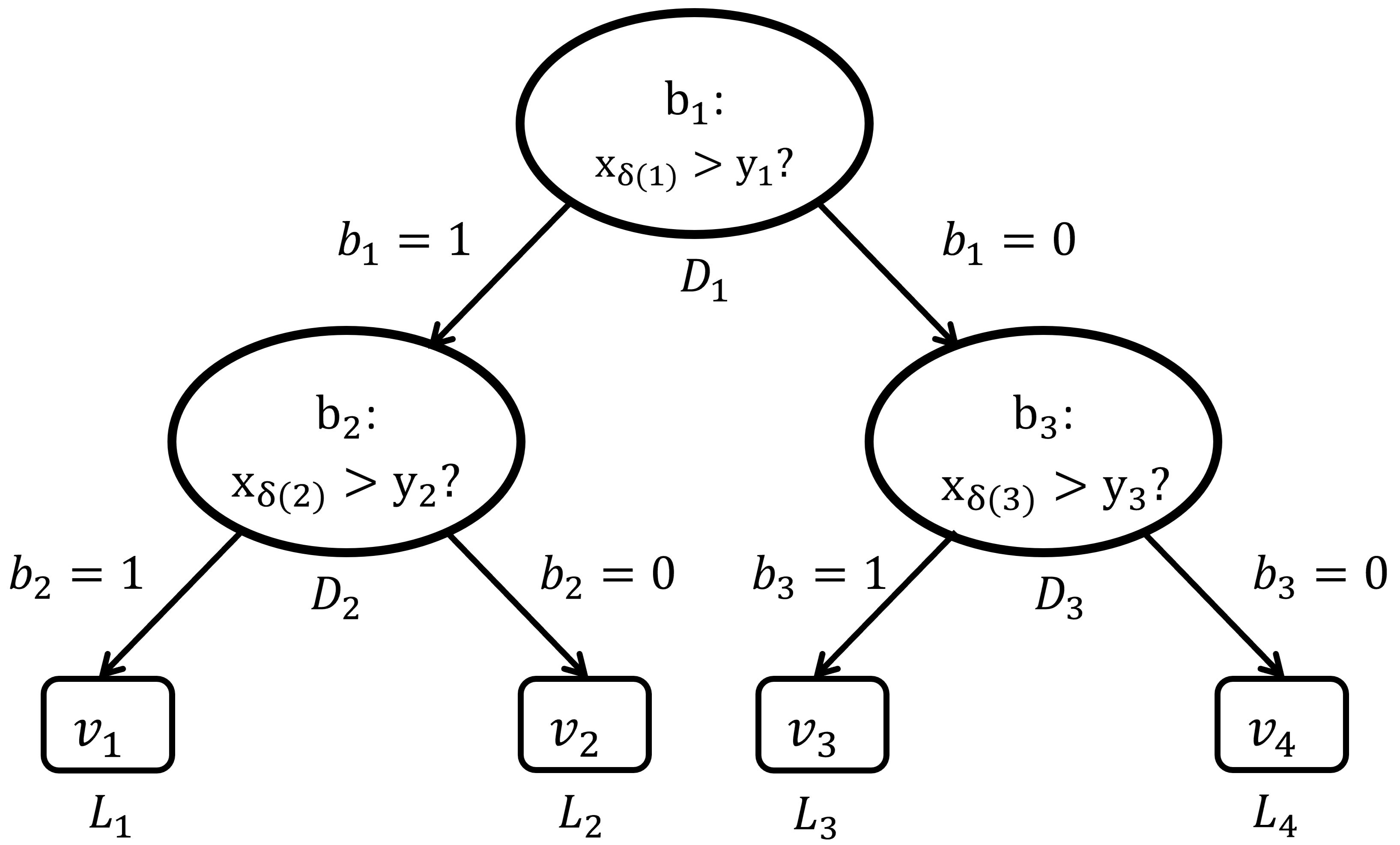}
\caption{A decision tree of height 2 ($m=3,k=4$)}
\label{fig:Tree}
\vspace{-4mm}
\end{figure}

In this section, 
we provide the background on decision tree evaluation and then describe the system framework, and the threat model for our ODTE protocol.

\subsection{Background on Decision Tree Evaluation}
\label{sec:DTE}
A {\em  decision tree} (see {\bf Fig. \ref{fig:Tree}} for an example) is a binary tree together with two vectors:
  a vector ${\bf y}=(y_1,\dots,y_{m}) \in \ZZ^m$
of {\em threshold  values}
and a  vector ${\bf v}=(v_1,\dots,v_k) \in  \ZZ^k$ of {\em classification  labels}.
Each {\em non-leaf  node} $D_j (1\leq j\leq m)$ of the tree is called  a  decision node  and associated with  a threshold value $y_j$;
each {\em leaf node} $L_j (1\leq j\leq k)$ of the tree  is associated with a classification label $v_j$.
A decision tree  takes a  {\em feature vector}  ${\bf x} = (x_1,\dots,x_n) \in \ZZ^n$ as input
and  outputs a   classification label.
Given a feature vector $\bf x$ and a mapping $\delta: [m]\rightarrow  [n]$
 as input, 
the evaluation of a decision tree is done by traversing a path
from the tree's root to a leaf as follows:
{\em
Performing feature selection by assigning a feature $x_{\delta (j)}$ 
to each decision node $D_j$ for all $j\in [m]$. 
To evaluate the tree, starting from the root $D_1$  of the tree,  
evaluate the boolean testing function
$f(x_{\delta (j)},y_j)=(x_{\delta (j)} > y_j)$ at every decision node $D_j$ to get a bit $b_j\in \{0,1\}$;
traverse the left  branch if  $b_j=1$ and  traverse the right branch if
  $b_j=0$, until a leaf node is reached.}

  \begin{figure}[t]
  \centering
  \includegraphics[width=\columnwidth,height=5cm]{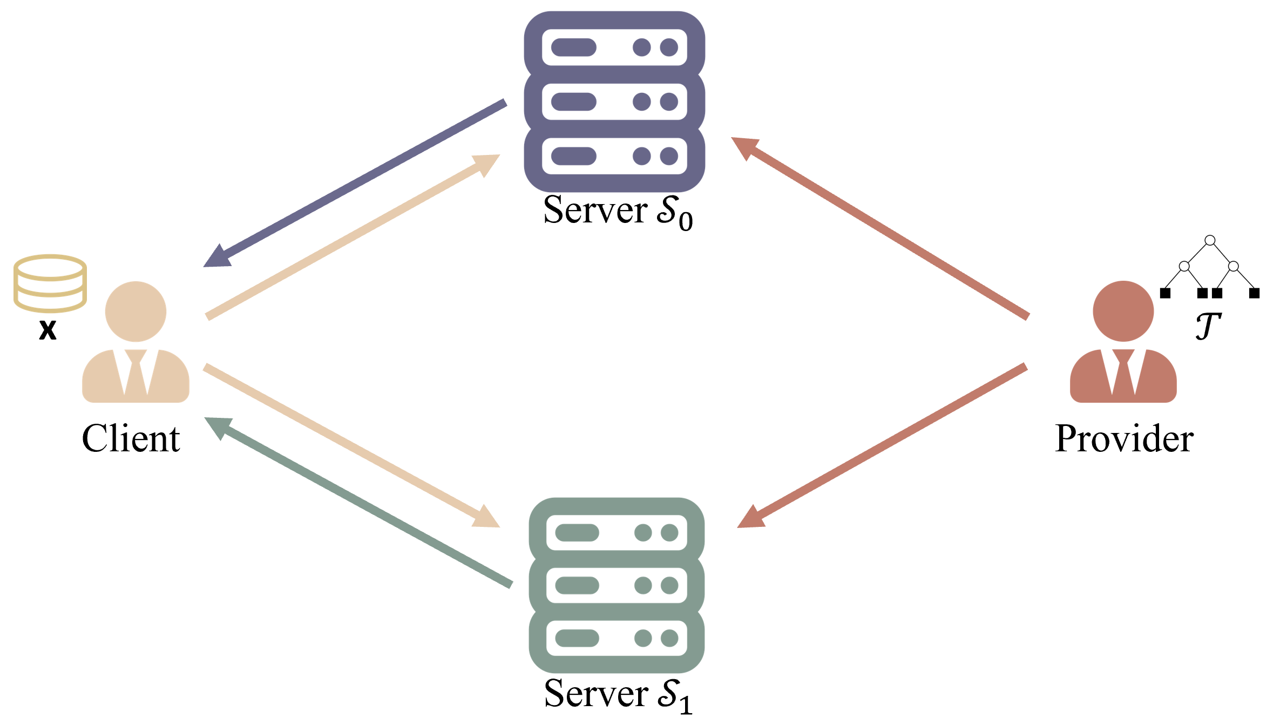}
\caption{Our ODTE System Framework}
\label{fig:Model}
\vspace{-4mm}
\end{figure}

\subsection{System Framework}

Our system framework for the ODTE protocol (see {\bf
Fig. \ref{fig:Model}}) consists of three types of entity: a
{\em model provider}, a {\em client}, and two {\em cloud servers}.

\noindent
{\bf Model provider.}
In our system, the model provider owns a proprietary decision tree
$\mathcal{T}$ 
(including the
 vector ${\bf y}$ of  threshold  values, the
  vector ${\bf v}$ of  classification  labels 
  and the mapping $\delta$) 
 that can make classification on
 a feature  vector $\bf x$. 
The model provider aims to offer classification services to clients by deploying the decision tree model on cloud servers, 
which provides benefits such as faster response times and ease of management.
In our framework, the model provider encrypts the threshold values ${\bf y}$ and classification labels ${\bf v}$ before sending them to each cloud server.
The encrypted feature mapping $\delta$ is made publicly available, after which the model provider can go offline.

 \noindent
{\bf Client.}
The client possesses a feature vector $\bf x$, 
which contains private and sensitive information.
The client seeks to leverage the computational powers of the cloud servers
to obtain the classification result ${\cal T}({\bf x})$. 
Initially, the client downloads the encrypted mapping $\delta$, 
a one-time cost since $\delta$ remains unchanged across multiple inferences. During the online phase, the client encodes the feature vector $\bf x$ and sends it to both cloud servers.
In response, the client receives \textit{a single response} message from each cloud server.
Using these two responses, the client efficiently extracts the classification result $\mathcal{T}({\bf x})$.

\noindent
{\bf Cloud servers.}
In our system,  the two cloud servers $\mathcal{S}_0$ and $\mathcal{S}_1$
are hosted by two independent cloud service providers.
Upon receiving the client's request,
each server  performs the required computations locally  and generates
{\em a single response} message to the client.
Notably, we require the cloud servers {\em never communicate with each other}.
This key distinction from existing two-server models for ODTE \cite{liu2019,zheng2022,zheng2023,ma2021,xu2023} significantly reduces the risk of collusion between servers and eliminates the high latency associated with S2S rounds.

\subsection{Threat Model}
\label{sec:threat model}

As in prior ODTE protocols, we first consider \emph{semi-honest} adversaries. These adversaries are assumed to follow protocol specifications but may attempt to extract additional information through local computations.
We assume a static corruption strategy, where the adversary corrupts one party (client or server) before protocol execution and gains access to its input.

In addition, our ODTE protocol is designed to be secure 
against \emph{malicious} servers. 
A corrupted server may attempt to access the proprietary decision tree $\mathcal{T}$ or the client's feature vector $\bf x$ and may tamper with the classification result $\mathcal{T}({\bf x})$ to mislead the client.
Our system ensures the following security guarantees: (i) A single cloud server cannot infer information about either $\bf x$ or $\mathcal{T}$; (ii) the client can verify the integrity of the classification result $\mathcal{T}({\bf x})$.

\noindent{\it Remark.} 
To hide the  structure of the tree $\mathcal{T}$, several
existing works \cite{wu2015,tai2017,zheng2023,zheng2022,xu2023,ma2021,liu2019}
have suggested the
non-complete decision trees should be made complete
by adding dummy decision nodes.
Without loss of generality, we  will
focus on {\em complete binary decision trees} in our protocol.

\subsection{Non-colluding Assumption}
To achieve privacy in multi-server protocols, secret sharing serves as a foundational technique: 
a user splits its private data $x$ into $n$ shares $\{x_1, x_2, ..., x_n\}$ and distributes them to $n$ servers. 
An $n$-server secret-sharing based protocol typically guarantees $t$-privacy (where $t<n$): 
meaning that any subset of $t$ servers cannot reconstruct $x$ from their shares, 
while $t+1$ servers can.
For two-server protocols ($n=2$), the $t$-privacy 
reduces to the non-colluding assumption ($t=1$): 
A single server cannot reconstruct $x$ from its share, but two servers can. 
Therefore, the non-colluding assumption is inherent to preserving privacy in two-server ODTE protocols.
To our knowledge, all existing two-server ODTE protocols \cite{liu2019,ma2021,xu2023,zheng2023,zheng2022,tsuchida2020,ji2023,cheng2024} rely on the non-colluding assumption.

To ensure verifiability against colluding servers, a common approach requires all servers to jointly participate in a zero-knowledge proof system \cite{groth2016a}. 
This ensures that even if all servers collude, they cannot manipulate the output without being detected. 
We defer exploring this promising direction to future work.

While the non-colluding assumption is inherent, several practical measures 
can significantly reduce the risk of collusion. 
First, our non-interactive protocol design relaxes the practical strictness 
of this assumption. Specifically, a model provider can deploy their models 
on two distinct cloud platforms (e.g., AWS and Azure), with no 
requirement for either cloud to be aware of the other's 
involvement~\cite{zhang2022a}. 
Second, game-theoretic mechanisms can be employed to actively deter and 
detect collusion between servers. 
Gong et al.~\cite{gong2024} proposed a game-theoretic framework that 
incentivizes rational servers to behave honestly by making collusion 
economically irrational, providing an additional layer of practical 
assurance beyond infrastructure-level separation.

\section{Secure Integer Comparison}
\label{sec:sic}
As described in Section \ref{sec:DTE}, 
the evaluation of a decision tree hinges on a sequence of conditional checks at each internal node: for a given feature value $x_{\delta(j)}$ (selected via the model's feature mapping $\delta$) and threshold $y_j$, determine if $x_{\delta(j)} > y_j$. 
This comparison operation is the core of decision tree inference—yet it is also the most privacy-sensitive step. 
A naive implementation would leak not only the feature value and threshold but also the structure of the decision tree (e.g., which paths are taken). 
For ODTE, ensuring the privacy of this comparison while avoiding S2S rounds is a critical challenge.

Existing secure integer comparison solutions for multi-server settings~\cite{Damagrd2006,Nishide2007,Reistad2007,Sutradhar2023,zheng2022,zheng2023} rely on interactive protocols (e.g., rounds of message exchange between servers) to achieve privacy. 
However, these protocols are ill-suited for our non-interactive ODTE framework: frequent S2S communication introduces heavy communication overhead and high network latency (especially in WANs). 
A summary of the S2S rounds, communication cost in bits, and the number of multiplication gates required by these solutions is provided in TABLE \ref{table:sic}.

To address this gap, we design the first non-interactive secure integer comparison ($\sf SIC$) algorithm for two servers, which enables privacy-preserving comparison of integers encrypted by HSS without any S2S rounds. 
Our algorithm leverages HSS to support distributed, local computation, and outputs an expressive result (a secret share of the comparison bit) that can be directly used in subsequent decision tree traversal steps.

\subsection{Comparison of Two Integers within HSS}
Given two integers $\alpha$ and $\beta$ in the clear, computing $(\alpha>\beta)$ can be achieved through several established methods \cite{fischlin2001,schoenmakers2004,Kerschbaum2006}. 
The conventional approach involves calculating their difference $\alpha-\beta$ and determining its sign bit. 
However, this approach is insufficient in our non-interactive setting, where each server only holds shares of $\alpha-\beta$, making it impossible to directly compute the sign bit without S2S rounds. 

To address this limitation, we adopt an arithmetic circuit-based formulation. Let: 
$\alpha=\sum_{i=1}^t 2^{i-1} \cdot \alpha_i$, $\beta=\sum_{i=1}^t 2^{i-1} \cdot \beta_i$,
where $\alpha_i,\beta_i\in \{0,1\}$. 
This representation enables bit-level comparison operations through two main approaches: the Most Significant Bit (MSB) method and the Least Significant Bit (LSB) method. The MSB approach identifies the first differing bit by examining bits from position $i=t$ to $i=1$, while the LSB method builds the comparison result progressively from $i=1$ to $i=t$. Each method offers distinct advantages in terms of circuit depth and implementation efficiency within various secure computation frameworks. 

In the circuit-based approach, basic boolean operations and comparisons can be expressed as arithmetic equations. Specifically, for any two bits 
$\alpha_i,\beta_i\in \{0,1\}$, the boolean operations ($\wedge,\vee$) and comparison relations ($>,=$) are expressed in terms of arithmetic operations over integers $\ZZ$ as follows:
\begin{equation}
  \label{eq:gt}
\begin{aligned}
  (\alpha_i \wedge  \beta_i)&:=\alpha_i \cdot \beta_i, \\
  (\alpha_i\vee \beta_i)&:=\alpha_i+\beta_i-\alpha_i \cdot \beta_i, \\ 
  (\alpha_i>\beta_i)&:=\alpha_i \cdot(1-\beta_i),  \\
  (\alpha_i=\beta_i)&:=1-\alpha_i-\beta_i+2\cdot \alpha_i \cdot \beta_i. 
  \end{aligned}
\end{equation}

\noindent{\bf MSB approach}. 
The MSB approach relies on identifying the leftmost differing bit position. However, this approach is incompatible with non-interactive settings where servers only hold shares of the bits. Instead, the comparison can be rephrased as identifying the existence of a bit position $i$ where 
$\alpha_i>\beta_i$ while all more significant bits remain equal. 
This can be expressed as: $
  (\alpha>\beta) := \bigvee_{i=1}^{t}[(\alpha_i>\beta_i)\bigwedge_{j=i+1}^{t}(\alpha_i=\beta_i)]$. 
Using the arithmetic expressions in Eq. \eqref{eq:gt}, the MSB approach can also be formulated as:
\begin{displaymath}
  (\alpha>\beta) := \sum_{i=1}^{t}[\alpha_i(1-\beta_i)\prod_{j=i+1}^{t}(1-\alpha_i-\beta_i+2\alpha_i \beta_i)].
\end{displaymath}
This yields a $2t$-variate polynomial $P_{(\alpha>\beta)}$ of degree $2t$, 
demanding $t(t+1)/2$ homomorphic multiplications \cite{chen2023}. Each server $\calS_\sigma$ can evaluate $P_{(\alpha>\beta)} $ locally by executing ${\sf HSS.Eval}(\sigma, \ek_\sigma, (\{\C_{\alpha_i}\}_{i=1}^t, \{\C_{\beta_i}\}_{i=1}^t),P_{(\alpha>\beta)})$.

\begin{algorithm}[t]
  \caption{$\M_{c_{t},\sigma}\leftarrow {\sf SIC}(\sigma, \ek_\sigma, \{\C_{\alpha_i}\}_{i=1}^t, \{\C_{\beta_i}\}_{i=1}^t)$}  \label{alg:sic}
\begin{algorithmic}
\State  {\bf Input}. (1)
    $\sigma\in \{0,1\}$: the index of a  server ${\cal S}_\sigma$;
  (2) $\ek_\sigma$: the private evaluation key of HSS;
  (3) $\C_{\alpha_i}$ (resp. $\C_{\beta_i}$): a ciphertext  of the bit  $\alpha_i$ (resp. $\beta_i$)
  under  ${\sf HSS}$, where $1\leq i\leq t$.
\State {\bf Output}.  $\M_{c_{t},\sigma}$: a memory value of the bit $c_{t}$.
\State {\bf Step 1}. $\calS_\sigma$ generates a trivial memory value ${\sf M}_{1,\sigma}$;
\State {\bf Step 2}. ${\sf M}_{\alpha_1,\sigma} \leftarrow {\sf HSS.ConvertInput}(\ek_\sigma,\C_{\alpha_1})$;
\State  {\bf Step 3}. ${\sf M}_{c_1,\sigma} \leftarrow {\sf HSS.Mul}(\C_{\beta_1},\M_{\alpha_1,\sigma})$;
\State  {\bf Step 4}. ${\sf M}_{c_1,\sigma} \leftarrow {\sf HSS.Sub}(\M_{\alpha_1,\sigma},\M_{c_1,\sigma})$;
\State  {\bf Step 5}.
{\bf for $i\in \{1,\ldots,t-1\}$ do}
\State \hspace{15mm}
  ${\sf M}_{\alpha_{i+1},\sigma} \leftarrow {\sf HSS.ConvertInput}(\ek_\sigma,\C_{\alpha_{i+1}})$;
\State \hspace{15mm}
  $\C_{temp} \leftarrow {\sf HSS.Add}(\C_{\alpha_{i+1}},\C_{\beta_{i+1}} )$;
  \State \hspace{15mm}
  ${\sf M}_{c_{i+1},\sigma} \leftarrow {\sf HSS.Mul}(\C_{temp},{\sf M}_{c_i,\sigma})$;
  \State \hspace{15mm}    ${\sf M}_{c_{i+1},\sigma} \leftarrow {\sf HSS.Sub}({\sf M}_{c_i,\sigma}, \M_{c_{i+1},\sigma})$;
\State \hspace{15mm}    ${\sf M}_{2c_i,\sigma} \leftarrow {\sf HSS.Add}({\sf M}_{c_i,\sigma}, {\sf M}_{c_i,\sigma})$;
\State \hspace{15mm}    ${\sf M}_{2c_i-1,\sigma} \leftarrow {\sf HSS.Sub}({\sf M}_{2c_i,\sigma}, {\sf M}_{1,\sigma})$;
\State \hspace{15mm}    ${\sf M}_{temp,\sigma} \leftarrow {\sf HSS.Mul}(\C_{\alpha_{i+1}},{\sf M}_{2c_i-1,\sigma})$;
\State \hspace{15mm}    ${\sf M}_{temp,\sigma} \leftarrow {\sf HSS.Mul}(\C_{\beta_{i+1}},{\sf M}_{temp,\sigma})$;
\State \hspace{15mm}   ${\sf M}_{c_{i+1},\sigma} \leftarrow {\sf HSS.Add}({\sf M}_{c_{i+1},\sigma}, {\sf M}_{temp,\sigma})$;
 \State \hspace{15mm}  ${\sf M}_{c_{i+1},\sigma} \leftarrow {\sf HSS.Add}({\sf M}_{c_{i+1},\sigma}, {\sf M}_{\alpha_{i+1},\sigma})$;
\State {\bf Step 6}. Output ${\sf M}_{c_{t},\sigma}$.
\end{algorithmic}
\end{algorithm}

\noindent{\bf LSB approach}. 
The LSB approach iteratively builds the comparison result for two integers. Let $c_0=0$, and for $1\leq i\leq t$, let $c_i$ denote the comparison result for the $i$-bit substring 
$(\alpha_{i}\dots \alpha_1 > \beta_{i}\dots \beta_1)$. 
The value of $c_{i+1}$ can be recursively defined as:
\begin{equation}
c_{i+1}:=\  (\alpha_{i+1}>\beta_{i+1})\vee[(\alpha_{i+1}=\beta_{i+1})\wedge c_i], 0\leq i < t, \label{eq:lsbb}
\end{equation}
where $c_{t}$ indicates the final comparison result $(\alpha>\beta)$ 
for $\alpha,\beta$ being $t$-bit integers. 
Using the arithmetic expressions defined in Eq. \eqref{eq:gt}, we can transform Eq. \eqref{eq:lsbb} into:
\begin{equation}
  \begin{aligned}
  c_{i+1}:=\ & \alpha_{i+1}(1-\beta_{i+1}) + \\ & 
  c_{i} (1-\alpha_{i+1}-\beta_{i+1}+2\alpha_{i+1}\beta_{i+1}), 0\leq i < t,\label{eq:lsb}
  \end{aligned}
\end{equation} 
In this formulation, $(\alpha>\beta):=c_{t}$. 
This sequential formulation requires only $5t-4$ multiplication gates. 

To implement this LSB method in a non-interactive setting, Eq. \eqref{eq:lsb} must be compatible with the evaluation under HSS. 
Indeed, the multiplication operations in Eq. \eqref{eq:lsb} naturally align with the RMS program requirements: $c_i$ serves as an intermediate value (a memory value), while $\alpha_i,\beta_i$ are provided as HSS ciphertexts (input values). 
The product $\alpha_{i}\beta_i$ is computed by converting $\alpha_{i}$ to memory values using $\mathsf{HSS.ConvertInput}$. 
This HSS-based implementation requires a total of $6t-4$ multiplication gates during evaluation, accounting for the additional $\sf HSS.mul$ gate per iteration during $\sf HSS.ConvertInput$ (See Section \ref{sec:rms}).

To further optimize the LSB method, we can minimize the number of multiplication gates by transforming Eq. \eqref{eq:lsb} into a more efficient form. 
Through ingenious algebraic manipulation, we reconfigure and rewrite Eq. \eqref{eq:lsb}, converting some complex multiplication and addition gates into simpler ones:
\begin{equation}
  \begin{aligned}
  c_{i+1}:=\ & c_i - c_i(\alpha_{i+1}+\beta_{i+1}) + \\ & 
  (2c_{i}-1)\alpha_{i+1}\beta_{i+1}+\alpha_{i+1}, 0\leq i < t.\label{eq:flsb}
  \end{aligned}
\end{equation} 
The advantage of this optimization (Eq. \eqref{eq:flsb}) lies in the substantial reduction of computational complexity. 
Specifically, the number of multiplication gates is decreased to $4t-2$. 
With these optimizations implemented, we proceed to present our complete \textsf{SIC} algorithm ({\bf Algorithm \ref{alg:sic}}).

In {\bf Algorithm \ref{alg:sic}}, server ${\cal S}_\sigma$ first initializes the constant unit share ${\cal M}_{1,\sigma}$ in Step 1. and immediately computes the base term ${\cal M}_{c_1,\sigma}$ by effectively evaluating $\alpha_1(1 - \beta_1)$ on memory values in Step 2-4. 
Typically following the recursion in Eq. \eqref{eq:flsb}, ${\cal S}_\sigma$ iteratively updates its state to obtain final memory value ${\cal M}_{c_{t},\sigma}$ in Step 5. 
It should be noted that each server $\calS_\sigma$ can further extract a linear share of $c_{t}$ through the invocation of the algorithm $\langle c_{t} \rangle_\sigma \leftarrow \mathsf{HSS.Output}(\M_{c_{t},\sigma},n_{\sf out})$, where $n_{\sf out}$ denotes the output modulus, satisfying the relation 
$c_{t}=(\langle c_{t} \rangle_1  - \langle c_{t} \rangle_0)$. 

The security of our \textsf{SIC} algorithm originates from the underlying HSS scheme: each server ${\cal S}_\sigma$ has access only to the encrypted bits of $\alpha$ and $\beta$, and obtains exclusively the shares of the comparison result $c_{t}$. 
Due to the security properties of HSS, these encrypted bits and shares reveal no information regarding the actual values under comparison.
Consequently, our \textsf{SIC} algorithm successfully implements secure integer comparison while maintaining privacy of both the input values $\alpha,\beta$ and the comparison result $(\alpha>\beta)$ against
each server ${\cal S}_\sigma$. 

\subsection{Dealing with Floating Point Numbers}
\label{sec:float}
While the \textsf{SIC} algorithm is designed for integer arithmetic, decision tree models usually require floating point number comparisons \cite{bost2015}. Thus, we must adapt our algorithm.

Since \textsf{SIC} operates at the bit level, floating-point comparisons transform effortlessly into integer operations via constant scaling. By multiplying $\alpha$ and $\beta$ by $K$ (e.g., $K=2^{52}$ for IEEE 754 doubles), 
the inequality $\alpha > \beta$ maps to an integer comparison of $\alpha' = K\cdot \alpha$ vs. $\beta' = K\cdot \beta$. This strategy preserves evaluation accuracy under the algorithm's integer arithmetic.

The precision requirements for such comparisons are well-supported by the underlying HSS scheme, which is typically implemented in cryptographic frameworks such as RSA groups \cite{orlandi2021,abram2022} (e.g., Paillier). These HSS schemes provide an exponentially large plaintext space of approximately $2^{1024}$, allowing for the use of a sufficiently large $K$ without sacrificing numerical precision.
Consequently, our \textsf{SIC} algorithm ({\bf Algorithm \ref{alg:sic}}) securely extends to floating-point comparisons.

\section{Our Semi-honest ODTE Protocol}
\label{sec:prot}

In this section, we first propose a new semi-honest ODTE protocol based on 
the HSS scheme \textsf{HSS} in Section \ref{sec:HSS_Paillier}. 
The maliciously secure ODTE protocol is presented in Section \ref{sec:prot_malicious}.

\subsection{Protocol Overview}
At a high level, our protocol  consists of four phases: 
{\em setup},
{\em input preparation}, 
{\em server-side computation}, and {\em result reconstruction.}
We introduce each phase as follows. 
(i) {\bf Setup phase.} This phase initializes the protocol with HSS key pairs and a random permutation.
(ii) {\bf Input preparation phase.} This phase allows the provider to encrypt the decision tree, and the client to encrypt the feature vector.
(iii) {\bf Server-side computation phase.} In this phase, each cloud server locally evaluates the encrypted decision tree on the encrypted feature vector to generate a partial classification result for the client.
(iv) {\bf Result reconstruction phase.} This phase enables the client to recover the classification result corresponding to her feature vector.
We give the details of each phase in the subsequent sections.

\subsection{Setup Phase}
Our ODTE protocol begins with a trusted party executing the algorithm ${\sf HSS.Setup(\secparam)}$ (Section \ref{sec:HSS_Paillier}), 
which generates a public key $\pk$, 
and two private evaluation keys $\ek_0,\ek_1$,
as discussed in Section \ref{sec:HSS_Paillier}. 
The public key $\pk$ is available to all parties, including the client, the model provider and the two servers. 
Each private evaluation key $\ek_\sigma (\sigma=0,1)$ 
is only available to the server $\calS_\sigma$ respectively. 
Additionally, the trusted party selects a pseudo-random permutation (PRP) 
$\pi:[k]\rightarrow [k]$ and distributes $\pi$ to both servers.
Importantly, the requirement for a trusted setup can be eliminated by 
constructing the HSS scheme from Paillier-ElGamal encryption \cite{orlandi2021} 
and generating PRP by established PRP generators \cite{akl1984}.

\begin{algorithm}[t]
  \caption{ $\C_{\mathcal{M} \mathbf{x}}
\leftarrow {\sf SFS}( \C_{\mathcal{M}}, {\bf x} )$}  \label{alg:sfs}
\begin{algorithmic}
	\State  {\bf Input}. (1)
      $\C_{\mathcal{M}}$: an $m\times n$ matrix, $\C_{\mathcal{M}_{j,s}}$ is 
      the ciphertext of $\mathcal{M}_{j,s}$ under $\sf HSS$;
    (2) $\bf x$: the client's feature vector of length $n$, 
    each feature is $t$-bit long.
	\State {\bf Output}.  $\C_{\mathcal{M}\mathbf{x}}$: 
  an $m\times t$ matrix, 
  $\C_{(\mathcal{M}x)_{j,i}}$ is 
      the ciphertext of the $i$th bit of feature $x_{\delta (j)}$ 
      under $\sf HSS$.
  \State {\bf Step 1}.  
  Represent the feature vector $\bf x$ in binary format, 
  where ${\bf x}_s=\sum_{i=1}^{t} 2^{i-1} \mathbf{x}_{s,i}$.
\State  {\bf Step 2}.
{\bf for $j\in\{1,\dots,m\}$ do}
\State \hspace{15mm}
{\bf for $i\in\{1,\dots,t\}$ do}
\State \hspace{20mm}
$\C_{(\mathcal{M}x)_{j,i}} \leftarrow \sum_{s=1}^{n} \C_{\mathcal{M}_{j,s}\cdot { x}_{s,i} }$
\State \Comment $
\C_{\mathcal{M}_{j,s}\cdot { x}_{s,i} }:= \C_{\mathcal{M}_{j,s}} 
{\rm \ if\ } \mathbf{x}_{s,i}=1 {\rm \ else\ } 
\C_{\mathcal{M}_{j,s}\cdot {x}_{s,i} }:= \C_0$;
\State {\bf Step 3}.  Output  $\C_{\mathcal{M}\mathbf{x}}$.
\end{algorithmic}
\end{algorithm}

\subsection{Input Preparation Phase}
\label{sec:ipp}
In this phase, the model provider and the client prepare their respective inputs, 
${\cal T}$ and $\bf x$, for the cloud servers. 

\noindent\textbf{Model provider.} The model provider first encrypts the threshold value $y_j$ at each decision node of the decision tree $\cal T$ and the classification label $v_i$ at each leaf node of $\cal T$: 
\begin{itemize}
    \item For $j\in [m]$, encrypt the threshold value
     ${y}_j=\sum_{i=1}^t 2^{i-1} \cdot {y}_{j,i}$ at each
      decision node $D_j$  by bit using  $\sf HSS$:
    \begin{equation} \label{eqn:cyji}
      \C_{{y}_{j,i} } \leftarrow {\sf HSS.Input}(\pk,{ y}_{j,i}). 
    \end{equation}
    \item For $i\in [k]$, encrypt the classification  label ${v}_i$
    at each leaf node $L_i$ by bit using   $\sf HSS$:
    \begin{equation} \label{eqn:cvi}
      \C_{{v}_i} \leftarrow {\sf HSS.Input}(\pk,{v}_i).
    \end{equation}
\end{itemize}
In Eq. \eqref{eqn:cyji}, we encrypt the threshold values by bit 
because our \textsf{SIC} algorithm ({\bf Algorithm \ref{alg:sic}}) 
takes the HSS ciphertexts of bits as input. 
Besides to the encryption of these data values, 
the provider must also appropriately encrypt the mapping 
$\delta: [m] \rightarrow [n]$ for use in feature selection while maintaining privacy.
To solve this problem, our main idea is to find an appropriate representation for $\sigma$, 
and we resort to representing $\sigma$ as an $m\times n$ matrix $\mathcal{M}$ 
where each row $\mathcal{M}_j$ is a binary vector with $n$ elements and 
the only nonzero element of $\mathcal{M}_j$ is located at the position $\delta(j)$. 
Under such a representation, it is easy to see that 
$x_{\delta{(j)}}=\mathcal{M}_j {\bf x}$. 
The model provider may encrypt the matrix $\mathcal{M}$ as $\C_{\mathcal{M}}$ via $\sf HSS.Input$:
  \begin{equation} \label{eqn:cmap}
    \C_{{\mathcal{M}}_{j,s}} \leftarrow {\sf HSS.Input}(\pk,\mathcal{M}_{j,s}),j\in [m], s\in [n],
  \end{equation} 
thus ensuring the mapping's privacy through the {\em security} of $\sf HSS$. 

\noindent\textbf{Client.} 
A straightforward way for the client to prepare her encrypted feature vector is to encrypt it bit-by-bit, similar to the model provider's approach. 
However, this approach may be impractical due to the limitations of most HSS, which only supports evaluations on RMS programs. 
For instance, consider a client encrypting a feature vector $\bf x$ bit by bit into 
\textsf{HSS} ciphertexts $\{\C_{x_{s,i}}\}$, where $s\in [n],i\in [t]$, and sending these ciphertexts to the two servers. 
Upon receiving the ciphertexts, server $\calS_\sigma$ would need to compute $x_{\delta{(j)}}=\mathcal{M}_j {\bf x}$ via the product $\C_{\mathcal{M}_{j,s}} \cdot \C_{x_{s,i}}$. However, the group-based \textsf{HSS} schemes~\cite{boyle2016,orlandi2021,abram2022} does not support ciphertext-ciphertext multiplication, as its underlying encryption scheme is only linearly homomorphic over a group (e.g., Paillier). 
Consequently, evaluating $\mathcal{M}_{j,s} \cdot x_{s,i}$ requires one operand to remain in plaintext. 
Therefore, we propose a client-side \textit{secure feature selection} (\textsf{SFS}) algorithm that encrypts the feature vector and performs feature selection simultaneously. 
This algorithm leverages the additive homomorphism inherent in the \textsf{HSS} scheme, ensuring compatibility with subsequent computations while maintaining the efficiency and security of the protocol.

In our algorithm, the encrypted mapping matrix $\C_{{\mathcal{M}}}$ is made publicly available by the model provider. 
Since $\C_{\mathcal{M}}$ remains constant for a given decision tree, the client can download it once and reuse it for multiple inferences.
The algorithm $\sf SFS$ takes two inputs: the encrypted mapping matrix 
$\C_{\mathcal{M}}$ of dimension $m\times n$, and the client's feature vector ${\bf x}=(x_1,\dots,x_n)$. The feature vector is represented as an $n \times t$ matrix through bit decomposition of each feature.
The algorithm outputs an $m \times t$ matrix $\C_{\mathcal{M}\mathbf{x}}$, 
where each entry $\C_{(\mathcal{M}\mathbf{x})_{j,i}}$ is an HSS ciphertext of the $i$th bit of $x_{\delta(j)}=\mathcal{M}_{j}{\bf x}$. 
This approach ensures that the client's feature vector is both feature-selected and encrypted bit by bit, satisfying the requirements for subsequent secure computation.

The \textsf{SFS} algorithm leverages the additive homomorphism of HSS ciphertexts, utilizing the $\C_{x+y}\leftarrow {\sf Add}(\C_x,\C_y)$ instruction (described in Section \ref{sec:rms}). 
Observing that: 
\begin{equation} \label{eqn:mx}
  \mathcal{M}_{j}{\bf x}=(
  \sum_{s=1}^{n} \mathcal{M}_{j,s}\cdot x_{s,1},\dots, 
  \sum_{s=1}^{n} \mathcal{M}_{j,s}\cdot x_{s,t}),
\end{equation}
where ${x}_{s,i}, i\in [t]$ is the $i$th bit of the feature $x_s$, 
it follows that: $\mathcal{M}_{j,s}\cdot x_{s,i}== \mathcal{M}_{j,s}$ if 
$x_{s,i}=1$ and $0$ otherwise.
With $\mathcal{M}_{j,s}$ encrypted as $\C_{\mathcal{M}_{j,s}}$, Eq. \eqref{eqn:mx} can be expressed as: 
\begin{equation} \label{eqn:mx2}
  \C_{\mathcal{M}_{j}{\bf x}}=(
  \sum_{s=1}^{n} \C_{\mathcal{M}_{j,s}\cdot x_{s,1}},\dots, 
  \sum_{s=1}^{n} \C_{\mathcal{M}_{j,s}\cdot x_{s,t}}).
\end{equation}
Here, 
$\C_{\mathcal{M}_{j,s}\cdot x_{s,i}}$ equals $\C_{\mathcal{M}_{j,s}}$ if $x_{s,i}=1$ and $\C_0$ (the HSS ciphertext of integer $0$) otherwise. 
The summation $\sum_{k=1}^{n}$ represents the aggregation of $n$ HSS ciphertexts. 
In practice, $\C_0$ need not be explicitly computed, as it can be omitted from addition operations. 
The complete implementation of the $\sf SFS$ algorithm is detailed in {\bf Algorithm \ref{alg:sfs}}. 
A quantitative analysis of the computational and communication costs of the $\sf SFS$ algorithm is provided in Section \ref{sec:offline cost}.

Although downloading $\C_{\mathcal{M}}$ may initially appear burdensome for the client, it is a one-time cost, as the matrix can be reused for all subsequent inferences with the same decision tree. A detailed quantitative analysis of the one-time offline costs is provided in Section~\ref{sec:offline cost}. 
For deployments where even this one-time cost is unacceptable, the supplementary material discusses an alternative instantiation based on RLWE-based \textsf{HSS}, which eliminates the offline download entirely by delegating the computation of $\C_{\mathcal{M}\mathbf{x}}$ to the servers, at the expense of increased server-side computational overhead.

\subsection{Server-side Computation Phase}
In this phase, each server $\calS_\sigma ~(\sigma=0,1)$ locally evaluates
  the encrypted decision tree $(\C_{\bf y}, \C_{\bf v})$ 
  on the encrypted selected feature vector
   $\C_{\mathcal{M}\mathbf{x}}$ to generate a partial classification result ${\rm output}_\sigma$
   for the client.
In particular, each server's computation requires two algorithms: 
  {\em  Secure integer comparison} (\textsf{SIC}); and
  {\em Secure  result generation} (\textsf{SRG}).

At each decision node $D_j, j\in [m]$,
each server takes both an encrypted  threshold value
$\{\C_{ y_{j,i}}\}, i\in [t]$ and an
encrypted feature value$\{\C_{(\mathcal{M}x)_{j,i}}\}, i\in [t]$ as input,
invokes the algorithm  \textsf{SIC} to compute a memory value of the Boolean  value
$b_j=({x}_{\delta (j)}>{ y}_j)$.
Once all decision nodes are evaluated,
the server invokes the algorithm \textsf{SRG}. At each decision node $D_j$, \textsf{SRG}
 assigns   a function of $D_j$'s share to every outgoing edge of
$D_j$. Finally, with the  values on all edges, \textsf{SRG}  computes
a share for each leaf node. 
The shares generated by the two servers, one from each server for every leaf node, collectively determine a \textit{value} that indicates whether the corresponding leaf node contains the classification result $\mathcal{T}({\bf x})$.

\begin{algorithm}[t]
  \caption{$ (\{ \langle pc_{\pi(i)}^* \rangle_{\sigma},\langle v_{\pi(i)}^* \rangle_{\sigma}
  \}_{i=1}^k )\leftarrow
{\sf SRG}(\sigma, \ek_\sigma,\pi$, 
$\{ \M_{b_j,\sigma} \}_{j=1}^m,
\{\C_{v_i} \}_{i=1}^k)$}\label{alg:srg}

\begin{algorithmic} 	
	\State  {\bf Input}.
    (1) $\sigma\in \{0,1\}$: the index of a  server ${\cal S}_\sigma$;
    (2) $\ek_\sigma$: the private evaluation key of HSS, which includes the secret key
    $\sf k_{prf}$ for the PRF;
    (3) $\pi$:  a random permutation of the integers $1,\ldots,k$;
    (4) $\{\M_{b_j,\sigma} \rangle\}_{j=1}^m$:
    the memory value of the $m$ comparison results at $m$ decision nodes;
    (5) $\{\C_{v_i}  \rangle\}_{i=1}^k$:
    the ciphertexts of the $k$ classification labels at $k$ leaf nodes.
	\State {\bf Output}.
  (1) $\{ \langle pc_{\pi(i)}^* \rangle_{\sigma}  \}_{i=1}^k $:
  the subtractive shares of $k$ randomized path costs;
  (2) $\{ \langle v_{\pi(i)}^* \rangle_{\sigma}  \}_{i=1}^k$:
  the subtractive shares of $k$ randomized classification labels.
  \State {\bf Step 1}. $\calS_\sigma$ generates a trivial memory value ${\sf M}_{1,\sigma}$;
  \State {\bf Step 2}.
   For each decision node $D_j, j\in [m]$, assign  the 
   memory value of $ec_{j,0}$ and $ec_{j,1}$ respectively  to the outgoing edges $E_{j,0}$ and 
    $E_{j,1}$: $
        \M_{ec_{j,0},\sigma} \leftarrow {\sf HSS.Sub}(\M_{1,\sigma}, 
        \M_{b_j,\sigma})$,  
        $\M_{ec_{j,1},\sigma} \leftarrow \M_{b_j,\sigma}$.

\State  {\bf Step 3}.  For each leaf node $L_i,i\in [k]$, set the path cost $\M_{pc_i,\sigma} = \sum_{E_{j,\ell}\in P_i} \M_{ec_{j,\ell},\sigma}$.

\State {\bf Step 4}.  For each leaf node $L_i,i\in [k]$, convert the HSS ciphertexts of classification labels into memory values: 
\begin{displaymath}
  \M_{v_i,\sigma} \leftarrow {\sf HSS.ConvertInput}(\ek_\sigma, \C_{v_i}).
\end{displaymath}
\State {\bf Step 5}. For every $i\in [k]$, generate two random integers 
$r_{i,0}=F_{\sf k_{prf}}(i\|0)$ and $r_{i,1}=F_{\sf k_{prf}}(i\|1)$
 using the PRF $F$. 
Compute the randomized path cost and the randomized classification label as
\begin{displaymath}
  \begin{aligned}
    \M_{pc_i^*,\sigma} & \leftarrow {\sf HSS.cMul}(r_{i,0}, \M_{pc_i,\sigma}), \\ 
\M_{v_i^*,\sigma} & \leftarrow {\sf HSS.Add}(\M_{v_i,\sigma},
{\sf HSS.cMul}(r_{i,1}, \M_{pc_i,\sigma})).
  \end{aligned}
\end{displaymath}
\State {\bf Step 6}. Convert the memory values in {\bf Step 5} into linear shares: 
\begin{displaymath}
  \begin{aligned}
    \langle pc_i^* \rangle_\sigma & \leftarrow
{\sf HSS.Output}(\M_{pc_i^*,\sigma},n_{\sf out}), \\ 
\langle v_i^* \rangle_\sigma & \leftarrow
{\sf HSS.Output}(\M_{v_i^*,\sigma},n_{\sf out}).
  \end{aligned}
\end{displaymath}

\State  {\bf Step 7}.
Apply the  permutation $\pi$ to the values in {\bf Step 6}: 
$\{ \langle pc_{\pi(i)}^* \rangle_{\sigma}, 
\langle v_{\pi(i)}^* \rangle_{\sigma}  \}_{i=1}^k$. Output 
\begin{equation}
  \label{eqn:outS}
  \left(\{ 
    \langle pc_{\pi(i)}^* \rangle_{\sigma},
    \langle v_{\pi(i)}^* \rangle_{\sigma}
    \}_{i=1}^k 
   \right).
  \end{equation}
\end{algorithmic}
\end{algorithm}

\noindent\textbf{Secure integer comparison.}
After the input preparation phase, each server $\calS_\sigma$ holds 
the HSS ciphertexts of the bits of the threshold values 
($\{\C_{y_{j,i}}\}_{i\in [t]}^{j\in [m]}$) 
and the HSS ciphertexts of the bits of the feature-selected features 
($\{\C_{(\mathcal{M}{\bf x})_{j,i}}\}_{i\in [t]}^{j\in [m]}$). 
The memory value $\M_{b_j,\sigma}$ of the comparison result 
at the $j$th decision node is computed using the \textsf{SIC} algorithm as follows:
\begin{displaymath}
  \M_{b_j,\sigma} \leftarrow {\sf SIC}(\sigma,\ek_\sigma, 
  \{\C_{(\mathcal{M}{\bf x})_{j,i}}\}_{i=1}^t\}, 
  \{\C_{y_{j,i}}\}_{i=1}^t\}).
\end{displaymath}   

\noindent\textbf{Secure result generation.}
Given the memory values $\{\M_{b_j,\sigma}\}_{j=1}^m$ 
at decision nodes $\{D_j\}_{j=1}^m$ 
and the HSS ciphertexts $\{\C_{v_i}\}_{i=1}^k$ 
at leaf nodes $\{L_j\}_{i=1}^k$, 
we design an \textsf{SRG} algorithm that allows
the servers to securely determine the final classification result
${\cal T}({\bf x})$. 
To this end, 
we adopt the {\em path cost} methodology from \cite{tai2017}, 
as also used in the secret-sharing based ODTE protocol \cite{zheng2022}, 
and tailor it for our HSS-based ODTE protocol.

The main idea of the path cost methodology is to assign
an {\em edge cost} to every outgoing edge of every decision node, 
compute the total edge cost (i.e., path cost) of a path for each leaf node, and
use the path cost to assess the likelihood of every leaf node containing ${\cal T}({\bf x})$.
In particular, 
 we assign the edge cost  $ec_{j,0}=1-b_j$ for edge $E_{j,0}$, 
 $ec_{j,1}=b_j$ for edge $E_{j,1}$ at each  decision node $D_j$, where $j\in [m]$. 
 Here, $E_{j,0}$ and $E_{j,1}$ are the left and right outgoing edge 
 of the decision node $D_j$, respectively. 
For every leaf node $L_i,i\in [k]$, the  path cost
$pc_i=\sum_{E_{j,\ell} \in P_i} ec_{j,\ell}$
is defined as the sum of the costs of all
edges on  $P_i$, the path from the root of $\cal T$ to the leaf $L_i$. 
The edge costs are chosen such that {\em a leaf node $L_i$ contains
the classification result  ${\cal T}({\bf x})$ if
and only if   $pc_i=0$.}
{\bf Algorithm \ref{alg:srg}} gives the details of our proposed \textsf{SRG} algorithm.

Similar to {\bf Algorithm \ref{alg:sic}}, {\bf Algorithm \ref{alg:srg}} starts its computation by 
generating a trivial memory value representing constant $1$. 
The subsequent operations are distributed between two non-communicating servers $\calS_\sigma,\sigma \in \bin$, 
which compute both edge costs and path costs.
Leveraging the additive homomorphism of HSS, each server computes: 
$\M_{ec_{j,0},\sigma}  \leftarrow {\sf HSS.Sub}(\M_{1,\sigma}, 
    \M_{b_j,\sigma})$, 
    and $\M_{ec_{j,1},\sigma}  \leftarrow \M_{b_j,\sigma}$.
This computation is valid since: $
  ec_{j,0} =  1- b_j = 
1- (\langle b_{j} \rangle_1  - \langle b_{j} \rangle_0) = 
(1- \langle b_{j} \rangle_1)  - (0-\langle b_{j} \rangle_0)$. 
The servers then convert the HSS ciphertexts $\C_{v_i}$ of each classification label $v_i$ into corresponding memory values $\M_{v_i,\sigma}$.
Note that direct return of path costs and classification labels is prohibited, as this would enable the client to recover
all path costs $pc_1,\ldots,pc_k$, 
not only identifying  the index 
$i\in[k]$ where $pc_i=0$ to determine
${\cal T}({\bf x})$, but also gaining additional information such as 
the nonzero path costs and the position $i$ of the leaf node that contains ${\cal T}({\bf x})$.
To hide the nonzero path costs, Step 5 of {\bf Algorithm \ref{alg:srg}} chooses
two random values  $r_{i,0}$ and $r_{i,1}$ for every leaf node
$L_i$, and masks the path cost shares and the leaf label shares,  such that
every non-zero path cost $ pc_{i'} (i'\neq i)$ becomes a random value $ pc_{i'}^*$ (after reconstruction) and
$pc_i$ becomes $ pc_i^*=0$.
{\em To conceal the position of the $i$ with $pc_i=0$},
Step 7 of {\bf Algorithm \ref{alg:srg}}
 requires each server to generated permuted values $\{ \langle pc_{\pi(i)}^* \rangle_{\sigma}  \}_{i=1}^k $ and
$\{ \langle v_{\pi(i)}^* \rangle_{\sigma}  \}_{i=1}^k $ to hide the indices of the path costs. 

\subsection{Result Reconstruction Phase}
Upon receiving the subtractive shares of randomized path costs and classification labels 
$\{\{\langle pc_{\pi(i)}^* \rangle_{\sigma}, 
  \langle v_{\pi(i)}^* \rangle_{\sigma}
  \}_{i=1}^k\}_{\sigma=0}^1 $
 from $\mathcal{S}_0$ and $\mathcal{S}_1$, 
 the client finds a 
 $\pi(i)\in[k]$ such that
$\langle pc_{\pi(i)}^* \rangle_{1} - \langle pc_{\pi(i)}^* \rangle_{0}=0$ 
and then obtains the classification result as  
${\cal T}({\bf x})= \langle v_{\pi(i)}^* \rangle_{1} - \langle v_{\pi(i)}^* \rangle_{0}$. 
Note that for any $i'\neq i$, we have    $\langle pc_{\pi(i')}^* \rangle_1 -
  \langle pc_{\pi(i')}^* \rangle_0 \neq 0$, and in this case, both  $\langle pc_{\pi(i')}^* \rangle_1 - $
  $ \langle pc_{\pi(i')}^* \rangle_0$ and $\langle v_{\pi(i')}^* \rangle_1 -
   \langle v_{\pi(i')}^* \rangle_0$ are random values 
   due to the techniques of using random masks and random permutations   in {\bf Algorithm \ref{alg:srg}}.

\subsection{Security Analysis}
\label{sec:semi1}
We treat the proposed protocol as an MPC protocol among the client, the model provider and two servers, and  follow the {\em ideal/real} world paradigm of \cite{zheng2023,zheng2022} to prove its security against a semi-honest client  and semi-honest  servers.
We begin with the following notations:

\noindent\textbf{Ideal Functionality.} Let $f^{\text{ODTE}}$ be a $\ppt$ functionality defined in the {\em ideal world} that captures the desired security properties of our ODTE protocol. 
  The functionality $f^{\text{ODTE}}$ takes as input a decision tree $\mathcal{T}$ from the model provider and a feature vector $\mathbf{x}$ from the client. It outputs the classification result $\mathcal{T}(\mathbf{x})$ to the client.

\noindent{\bf Real World Protocol.} Let $\Pi_{\text{ODTE}}$ denote our proposed protocol that realizes the functionality $f^{\text{ODTE}}$ in the {\em real world}. During an execution of protocol $\Pi_\text{ODTE}$, the output of server $\mathcal{S}_\sigma$ (where $\sigma \in \bin$) consists of additive shares as specified in Eq. \eqref{eqn:outS}: ${\rm output}_\sigma^{\Pi_\text{ODTE}}=(\{ \langle pc_{\pi(i)}^* \rangle_{\sigma}  \}_{i=1}^k ,
 \{ \langle v_{\pi(i)}^* \rangle_{\sigma}  \}_{i=1}^k )$. 
 The output of the client is the classification result ${\cal T}({\bf x})$. 
 The view of the client $\mathcal{U}$ during an execution of protocol $\Pi_{\text{ODTE}}$ is defined as:
$
{\mathbf{View}}_{\mathcal{U}}^{\Pi_{\text{ODTE}}} = (\pk,\mathbf{x}, \C_{\cal M}, {\rm output}_0^{\Pi_{\text{ODTE}}}, {\rm output}_1^{\Pi_{\text{ODTE}}})$, 
where $\pk$ is the public key of $\sf HSS$, 
$\mathbf{x}$ represents the client's feature vector and $\mathcal{C}_{\mathbf{M}}$ denotes the encrypted attribute mapping matrix received from the model provider.
The view of server $\mathcal{S}_\sigma$ during an execution of protocol $\Pi_{\text{ODTE}}$ comprises:
$
{\bf View}_{\calS_\sigma}^{\Pi_{\text{ODTE}}}=(\pk$, $\ek_\sigma,\{\C_{y_{j,i}}\}_{i\in [t]}^{j\in [m]}\cup
 \{\C_{{v}_i}\}_{i=1}^k\cup
 \{\C_{\mathcal{M}\mathbf{x}_{j,i}}\}_{i\in [t]}^{j\in [n]})$,
  where 
  $\pk$ is the public key of $\sf HSS$,
  $\ek_\sigma$ is the $\sigma$th evaluation key of $\sf HSS$, 
  $\C_{y_{j,i}}$ denotes the ciphertext of the $i$-th bit of the $j$-th element in vector $\mathbf{y}$,
  $\mathcal{C}_{{v}_{i}}$ represents the ciphertext of the $i$-th element in vector $\mathbf{v}$,
  $\C_{\mathcal{M}\mathbf{x}_{j,i}}$ corresponds to the ciphertext of the $(j,i)$-th element in matrix $\mathcal{M}\mathbf{x}$.

\begin{definition}[Semi-Honest Security]
The  protocol $\Pi_{\text{\em ODTE}}$ securely realizes the ideal functionality $f^{\text{\em ODTE}}$ {\it in the presence of semi-honest adversaries}
  if there exist $\ppt$ simulators 
  ${\bf Sim}_{\mathcal{U}}, {\bf Sim}_{\mathcal{S}_0}, {\bf Sim}_{\mathcal{S}_1}$, 
  such that $
  {\bf View}_{\mathcal{U}}^{\Pi_{\text{\em ODTE}}} \approx_c {\bf Sim}_{\mathcal{U}}({\bf x}, \mathcal{T}({\bf x}))$, and 
  ${\bf View}_{\mathcal{S}_\sigma}^{\Pi_{\text{\em ODTE}}} \approx_c{\bf Sim}_{\mathcal{S}_\sigma}, \sigma\in \bin$.
\end{definition}

\begin{theorem}
  Our protocol securely computes   the ideal functionality $f^{\text{\em ODTE}}$
  in the presence of semi-honest adversaries.
\end{theorem}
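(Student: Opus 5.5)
We will construct the three simulators separately and argue indistinguishability by hybrid arguments, with the security of $\sf HSS$ as the main tool.

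\textbf{Server simulators.} For ${\bf Sim}_{\mathcal{S}_\sigma}$, $\sigma\in\bin$, the simulator runs ${\sf HSS.Setup}(\secparam)$ itself to obtain $(\pk,\ek_0,\ek_1)$ and outputs $\pk$, $\ek_\sigma$, and encryptions of $0$ in place of every ciphertext in the real view. That is, it outputs $\C_0$ for each $\C_{y_{j,i}}$, for each $\C_{v_i}$, and for each $\C_{(\mathcal{M}\mathbf{x})_{j,i}}$. The PRP $\pi$ and the PRF key are sampled honestly, since they are independent of all inputs.

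We then pass through a sequence of hybrids, replacing one honest ciphertext by $\C_0$ at a time. Any distinguisher between two adjacent hybrids yields an adversary against the security property of $\sf HSS$ (Section~\ref{sec:HSS_Paillier}) for a server holding $\ek_\sigma$. A standard hybrid over the polynomially many ciphertexts ($mt+k+mt$ of them) gives ${\bf View}_{\calS_\sigma}\approx_c{\bf Sim}_{\calS_\sigma}$.

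One subtlety must be addressed: $\C_{(\mathcal{M}\mathbf{x})_{j,i}}$ is not a fresh ${\sf HSS.Input}$ output but a homomorphic sum of the $\C_{\mathcal{M}_{j,s}}$. We handle this by noting that the server never sees $\C_{\mathcal{M}}$ itself. Either we assume the group-based HSS supports rerandomization (the client rerandomizes each sum), or we reduce directly. In the direct reduction, the reduction receives challenge encryptions of the entries of $\mathcal{M}$ (versus of $0$) and forms the sums itself. Under real challenges these are encryptions of the bits of $x_{\delta(j)}$; under zero challenges they are encryptions of $0$.

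\textbf{Client simulator.} ${\bf Sim}_{\mathcal{U}}({\bf x},\mathcal{T}({\bf x}))$ runs Setup and outputs $\pk$, ${\bf x}$, and $\C_{\mathcal{M}}$ replaced by encryptions of $0$; this step is justified by HSS security, since the client holds no $\ek$. It then simulates the two servers' outputs as follows:
\begin{itemize}
\item it picks a uniform position $i^*\in[k]$;
\item for each $i\ne i^*$ it sets the reconstructed values $pc^*_i$ and $v^*_i$ to be independent uniform values in $\mathbb{Z}_{n_{\sf out}}$, conditioned on $pc^*_i\neq0$;
\item it sets $pc^*_{i^*}=0$ and $v^*_{i^*}=\mathcal{T}({\bf x})$;
\item it outputs $\langle\cdot\rangle_0$ uniform and $\langle\cdot\rangle_1=\langle\cdot\rangle_0+{\rm value}$.
\end{itemize}

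To justify this distribution we argue in three hybrids:
\begin{enumerate}
\item Replace $\C_{\mathcal{M}}$ by encryptions of $0$ (HSS security).
\item Replace the PRF outputs $r_{i,0},r_{i,1}$ by truly random values, and the PRP $\pi$ by a truly random permutation (PRF/PRP security, since the client never sees the keys).
\item Analyze the resulting distribution statistically:
\begin{itemize}
\item By HSS correctness, the reconstructed values equal $r_{i,0}pc_i$ and $v_i+r_{i,1}pc_i$ except with negligible probability.
\item For $i\neq i^*$ we have $pc_i\in[1,h]$, which is invertible modulo $n_{\sf out}$ (an RSA modulus with large prime factors), so both values are uniform. $r_{i,0}pc_i$ is uniform nonzero except with negligible probability, and $v_i$ is perfectly masked.
\item The unique zero-cost leaf sits at a uniformly random position because $\pi$ is random.
\item The individual shares reveal only the value, by the randomness of the HSS shares. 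If HSS output shares are not uniformly distributed on their own, we instead argue about the joint pair as the client receives it, which is all the view contains.
\end{itemize}
\end{enumerate}

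\textbf{Main obstacle.} The step that needs the most care is this last statistical argument about the client's view. It must establish three things: that multiplying a nonzero path cost by a random $r_{i,0}$ yields a uniform value modulo $n_{\sf out}$ (requiring $\gcd(pc_i,n_{\sf out})=1$); that the output shares have the correct joint distribution; and that the pseudorandomness replacements are legitimate. I would first try to fix the modulus to ensure invertibility, and otherwise accept a negligible statistical distance.
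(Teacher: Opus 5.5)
Your route is the paper's: the same input-free dummy-encryption simulator for each server, and the same client simulator (a uniformly placed zero path cost, random masked values, uniform subtractive shares). You make the HSS, PRF/PRP and masking steps explicit, and you correctly locate the delicate points the paper's proof passes over. However, neither of your proposed resolutions closes them as written.

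\textbf{Server side.} Row $j$ of $\C_{\mathcal{M}\mathbf{x}}$ consists of subset-sums $\sum_{s\in S_i}\C_{\mathcal{M}_{j,s}}$ of the \emph{same} $n$ ciphertexts, with $S_i=\{s: x_{s,i}=1\}$, and ${\sf HSS.Add}$ is deterministic. Your direct reduction swaps the $\C_{\mathcal{M}_{j,s}}$ for encryptions of $0$ but keeps this $\mathbf{x}$-dependent pattern, and HSS security cannot remove it. If every feature equals $3$, then $S_1=S_2=[n]$, no $\C_0$ term appears, and $\C_{(\mathcal{M}\mathbf{x})_{j,1}}=\C_{(\mathcal{M}\mathbf{x})_{j,2}}$ as strings. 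For the all-$1$ input these two ciphertexts differ with overwhelming probability. So no input-independent ${\bf Sim}_{\mathcal{S}_\sigma}$ exists for the protocol as written.

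It gets worse when the $\C_0$ terms are omitted, as the paper recommends: then every additive relation among the $S_i$ becomes a checkable relation among ciphertexts. Moreover, $\C_{\mathcal{M}}$ is public in the system model. A server can therefore recover each $S_i$, hence all bits of $\mathbf{x}$, by a subset search over one row of $\C_{\mathcal{M}}$ when $n$ is small. So ``the server never sees $\C_{\mathcal{M}}$'' is not something to rely on. Your rerandomization option is therefore not an alternative but a necessary change to \textsf{SFS}: the client adds a fresh ${\sf HSS.Input}(\pk,0)$ to every entry, which assumes HSS ciphertexts can be rerandomized this way. With that change, your hybrid argument goes through.

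\textbf{Client side.} The two received shares are in bijection with $(\langle y\rangle_0,y)$. ``Arguing about the joint pair'' is thus the same as showing $\langle y\rangle_0$ is simulatable, so your fallback is vacuous. What is needed is a context-hiding property: the two output shares together must be indistinguishable from a random sharing of $y$, even given the ciphertexts the client knows. This does not follow from the paper's HSS definition (correctness plus single-server security). It holds, for example, if both servers add a common PRF mask to each output share.

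The same property dictates your hybrid order. A reduction that receives challenge encryptions of $\mathcal{M}$ holds neither evaluation key, so it cannot compute the servers' outputs. You must therefore first replace the outputs by fresh sharings of the true masked values, then apply PRF/PRP security and the statistical step, and only at the end swap $\C_{\mathcal{M}}$.

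Finally, your uniformity argument is fine for $n_{\sf out}=N$, which the paper implicitly uses. The fallback ``accept a negligible statistical distance'' fails, though, if $n_{\sf out}$ has a small prime factor (e.g., a power of two): then $r_{i,0}pc_i$ reveals the $2$-adic valuation of $pc_i$. You need $n_{\sf out}$ to have no prime factor $\le h$.

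The paper's own proof silently assumes all of these points.
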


\begin{proof}
According to our security definition,
we need to construct the  simulators
${\bf Sim}_{\mathcal{U}}, {\bf Sim}_{\mathcal{S}_0}$ and 
${\bf Sim}_{\mathcal{S}_1}$.

\noindent \textbf{Simulator for the client - ${\bf Sim}_{\mathcal{U}}$.}
Given the feature vector  ${\bf x} \in \ZZ^n$ 
and the classification result
 ${\cal T}({\bf x})$, 
 ${\bf Sim}_{\mathcal{U}}$ firstly generates HSS keys: $(\pk,\ek_0,\ek_1)\gets {\sf HSS.Gen}(\secparam)$.
The distribution of $\C_{\cal M}$ can be simulated by simply choosing an $m\times n$ binary matrix ${\cal M}'$,
and encrypt each entry ${\cal M}'_{j,i}$ using ${\sf HSS.Input}(\pk,{\cal M}'_{j,i})$ to obtain $\C_{{\cal M}'}$. 
It is clearly that $\C_{\bf {\cal M}'}$ is computationally indistinguishable with $\C_{\cal M}$. 
Note that the messages ${\rm output}_0^{\Pi_{\text{ODTE}}}$ and ${\rm  output}_1^{\Pi_{\text{ODTE}}}$ received by the client 
 includes the secret shares of the permuted and randomized path costs and classification values.
The simulator can simulate these distributions as follows.
\begin{itemize}
\item
Choose an index $\eta \leftarrow  [k]$ uniformly at random and set  $pc_\eta = 0$. 
Choose   $pc_i \leftarrow \ZZ_N^*$ uniformly at random for all $i\in [k]\setminus \{\eta\}$.
\item
Set $v_\eta = \mathcal{T}({\bf x})$. Choose ${v}_i \leftarrow \ZZ_N^*$ uniformly at random for all $i\in[k]\setminus \{\eta\}$.
\item For every $i\in[k]$, choose the subtractive  shares
$\langle pc_i \rangle_0' \leftarrow \ZZ_N, \langle {v}_i \rangle'_0 \leftarrow \ZZ_N$ uniformly at random. Set 
$
\langle pc_i \rangle_1' = \langle pc_i \rangle_0' + pc_i \pmod{N},
\langle { v}_i \rangle_1' = \langle { v}_i \rangle_0' + { v}_i \pmod{N}
$.
 \item Set ${{\rm output'}_{\sigma}^{\Pi_{\text{ODTE}}}}=(\{ \langle pc_i \rangle_\sigma' \}_{i=1}^k, \{\langle {v}_i \rangle_\sigma'\}_{i=1}^k)$.
\end{itemize}

Due to the $\it security$ of the $\sf HSS$ during computation and
our randomization technique applied over the path costs and classification values respectively in our real protocol,
the position of the zero path cost and classification result is uniformly random in $\{1,\dots,k\}$,
and other non-zero path costs and corresponding classification labels are just random values.
So,
${\rm output}_\sigma^{\Pi_{\text{ODTE}}}$ is computationally indistinguishable with ${\rm output'}_{\sigma}^{\Pi_{\text{ODTE}}}$.
As such,
$
  {\bf View}_{\mathcal{U}}^{\Pi_{\text{ODTE}}} \approx_c {\bf Sim}_{\mathcal{U}}({\bf x}, \mathcal{T}({\bf x}))
$.

\noindent \textbf{Simulator for a server - ${\bf Sim}_{\mathcal{S}_\sigma}$.} 
${\bf Sim}_{\mathcal{S}_\sigma}$ firstly generates HSS keys: $(\pk,\ek_0,\ek_1)\gets {\sf HSS.Gen}(\secparam)$, 
and then encrypt $mt+k+nt$ dummy values to simulate 
$\{\C_{y_{j,i}}\}_{i\in [t]}^{j\in [m]}\cup
 \{\C_{{v}_i}\}_{i=1}^k\cup
 \{\C_{\mathcal{M}\mathbf{x}_{j,i}}\}_{i\in [t]}^{j\in [n]})$. 
Due to the \emph{security} of HSS, 
${\bf View}_{\mathcal{S}_\sigma}^{\Pi_{\text{ODTE}}} \approx_c {\bf Sim}_{\mathcal{S}_\sigma}$.
\end{proof}

\section{Our Maliciously Secure ODTE Protocol}
\label{sec:prot_malicious}
As discussed in Section \ref{sec:threat model},
the objective of ensuring security against malicious adversaries is to prevent two types of adversaries:
\begin{enumerate}
  \item A \textit{malicious server} (privacy attack) who attempts to deduce information about both the proprietary decision tree  $\mathcal{T}$  and
the sensitive feature vector $\bf x$; and
\item A \textit{malicious server} (verifiability attack) who attempts to manipulate the classification result  ${\cal T}({\bf x})$ 
and deceive the client.
\end{enumerate}

To our knowledge, current two-server ODTE protocols \cite{xu2023,liu2019,ma2021,zheng2023,zheng2022}, do not ensure security against two malicious adversaries. Only the PDTE protocol of \cite{ma2021}, which 
does not support outsourcability achieves security against malicious adversaries by replacing the underlying garbled circuit method with its maliciously secure version.

The challenge in designing a maliciously secure ODTE protocol lies in the necessity for multiple rounds of S2S communication. 
During the process of interactions, controlling malicious behaviors becomes extremely complex.
Fortunately, the servers in our ODTE protocol do not need to communicate with each other, thereby limiting malicious activities to local computations.

\subsection{Maliciously Secure HSS}
\label{sec:mhss}
A straightforward solution to achieve malicious security in our ODTE protocol is to replace the underlying HSS scheme with its maliciously secure version \cite{abram2022}. 
The core idea is to incorporate verifiable information within the memory values in a semi-honest HSS scheme. 
At the beginning of the protocol, the trusted party generates a Message Authentication Code (MAC) key $A$, and sends it to the client while distributing its linear share of $\langle A \rangle_0,\langle A \rangle_1$ to the servers, where $\langle A \rangle_1-\langle A \rangle_0=A$.
For any plaintext $x$, the corresponding memory value $\M_{x,\sigma}$ now comprises not only a linear share of $x$ but also a linear share of $A\cdot x$. 
And the algorithm $\sf HSS.Output(\M_{x,\sigma})$ is modified to yield $\langle x \rangle_\sigma$ along with $\langle A\cdot x \rangle_\sigma$. 
Upon receiving these linear shares $\langle x \rangle_0, \langle x \rangle_1, \langle A\cdot x \rangle_0, \langle A\cdot x \rangle_1$, 
the client can verify the correctness of the linear shares $\langle x \rangle_0, \langle x \rangle_1,$ by checking the equation: $
  \langle A\cdot x \rangle_1 - \langle A\cdot x \rangle_0 = A\cdot x$.

However, this approach has its drawbacks. 
First, for the verification process to be effective, the client must know the MAC key $A$. 
If the system serves multiple clients, the MAC key $A$ must be held among all participants, posing a risk to its confidentiality. 
There is no security if the server learns $A$.
Second, the computational overhead associated with a maliciously secure HSS is significantly higher. The memory value in a maliciously secure HSS is typically twice as extensive as that in a semi-honest HSS, at least doubling the computational complexity for each instruction in HSS.

\subsection{Our Maliciously Secure Method}
\label{sec:our_malicious}
To address these issues, we propose a similar yet effective method to achieve malicious security in our ODTE protocol. 

\noindent\textbf{MAC key preparation.} 
Instead of having a MAC key $A$ generated by a trusted party, 
we ask the client to generate a MAC key $A$ and encrypt it using $\sf HSS.Input$ to obtain $\C_A$ in the \emph{input preparation phase}. 
$\C_A$ is then distributed to two servers. 
As $A$ is only known to the client, and it is a one time key, the security of the MAC key is guaranteed. 

\noindent\textbf{Verifiable result generation.} 
In the \emph{sever-side computation phase}, we extend the \textsf{SRG} algorithm to \textsf{VRG} algorithm in {\bf Algorithm \ref{alg:vrg}}. 
Instead of doubling the length of the memory values, 
we perform a multiplication between the MAC key $A$ and the memory values of the classification labels to generate the verifiable classification labels  
in Step 6 in \textsf{VRG} algorithm. 
Note that there is no need to perform multiplication between the MAC key $A$ and the memory values of the path costs 
as the server does not know which path cost is zero. 
The probability of a malicious server manipulating the path costs is negligible.
Finally, the linear shares in Step 7 satisfy the equation:
$
    A\cdot (\langle v_{i}^* \rangle_1 - \langle v_{i}^* \rangle_0)  = 
    \langle w_{i}^* \rangle_1 - \langle w_{i}^* \rangle_0$.
Similar to the \textsf{SRG} algorithm, in Step 8, we perform a permutation $\pi$ on the linear shares.

\begin{algorithm}[t]
  \caption{$ (\{ \langle pc_{\pi(i)}^* \rangle_{\sigma},\langle v_{\pi(i)}^* \rangle_{\sigma},\langle w_{\pi(i)}^* \rangle_{\sigma}
  \}_{i=1}^k )  
  \leftarrow
{\sf VRG}(\sigma,$ $\ek_\sigma,\pi, 
\{ \M_{b_j,\sigma} \}_{j=1}^m,
\{\C_{v_i} \}_{i=1}^k,\C_A)$}\label{alg:vrg}
\begin{algorithmic} 
  \State {\bf Remark.} We use ``$\dots$'' to omit the unchanged inputs, outputs and computation steps in {\bf Algorithm \ref{alg:srg}}.
	\State  {\bf Input}.
    (1) $\dots$;
    (6) $\C_{A}$:
    the HSS ciphertexts of the MAC key $A$.
	\State {\bf Output}.
  (1) $\dots$;
  (3) $\{ \langle w_{\pi(i)}^* \rangle_{\sigma}  \}_{i=1}^k$:
  the subtractive shares of $k$ proofs of randomized classification labels.
  \State {\bf Step 1-5}. $\dots$

  \State {\bf Step 6}. Compute the proofs of the randomized classification label as: $
 \M_{w_i^*,\sigma} \leftarrow
{\sf HSS.Mul}(\C_A,\M_{v_i^*,\sigma})$.

\State {\bf Step 7}. Convert the memory values in {\bf Step 5} and {\bf Step 6} into linear shares: 
\begin{displaymath}
  \begin{aligned}
    \langle pc_i^* \rangle_\sigma & \leftarrow
{\sf HSS.Output}(\M_{pc_i^*,\sigma},n_{\sf out}), \\ 
\langle v_i^* \rangle_\sigma & \leftarrow
{\sf HSS.Output}(\M_{v_i^*,\sigma},n_{\sf out}), \\
\langle w_i^* \rangle_\sigma & \leftarrow
{\sf HSS.Output}(\M_{w_i^*,\sigma},n_{\sf out}).
  \end{aligned}
\end{displaymath}

\State  {\bf Step 8}.
Apply the  permutation $\pi$ to the values in {\bf Step 7}, output 
\begin{equation}
  \label{eqn:outvS}
  \left(\{ 
    \langle pc_{\pi(i)}^* \rangle_{\sigma},
    \langle v_{\pi(i)}^* \rangle_{\sigma},
    \langle w_{\pi(i)}^* \rangle_{\sigma}
    \}_{i=1}^k 
   \right).
  \end{equation}
\end{algorithmic}
\end{algorithm}

\noindent\textbf{Verifiable result reconstruction.}
Upon receiving the subtractive shares of randomized path costs, classification labels and proofs:
$
  \{\{\langle pc_{\pi(i)}^* \rangle_{\sigma}, 
  \langle v_{\pi(i)}^* \rangle_{\sigma},
  \langle w_{\pi(i)}^* \rangle_{\sigma}
  \}_{i=1}^k\}_{\sigma=0}^1$
 from $\mathcal{S}_0$ and $\mathcal{S}_1$, 
 the client finds a 
 $\pi(i)\in[k]$ such that
$\langle pc_{\pi(i)}^* \rangle_{1} - \langle pc_{\pi(i)}^* \rangle_{0}=0$ 
and then obtains the classification result as  
${\cal T}({\bf x})= \langle v_{\pi(i)}^* \rangle_{1} - \langle v_{\pi(i)}^* \rangle_{0}$.
To verify its correctness, the client can check the equation: $
A\cdot {\cal T}({\bf x}) = \langle w_{\pi(i)}^* \rangle_{1} - \langle w_{\pi(i)}^* \rangle_{0}$.

\subsection{Security Analysis}
\label{sec:anal}

We follow the standard ideal/real world paradigm to prove the security of our protocol against malicious servers. 
As discussed in Section \ref{sec:threat model}, we assume the servers may arbitrarily deviate from the protocol specification (e.g., by modifying the final shares), 
while the model provider and the client remain semi-honest (though the client verifies the result). 
We define an ideal functionality that captures the property of verifiability.

\noindent\textbf{Ideal Functionality.} 
Let $f^{\text{PVODTE}}$ be the functionality in the {\em ideal world} secure against malicious servers.
$f^{\text{PVODTE}}$ receives input $\mathcal{T}$ from the model provider and $\mathbf{x}$ from the client.
It computes the result $y = \mathcal{T}(\mathbf{x})$.
Before delivering $y$ to the client, $f^{\text{PVODTE}}$ allows the adversary (controlling the servers) to send a command. 
If the adversary sends ${\sf abort}$, $f^{\text{PVODTE}}$ sends $\bot$ to the client. 
Otherwise, it sends $y$ to the client.
Crucially, $f^{\text{PVODTE}}$ does not allow the adversary to modify $y$ into a valid $y' \neq y$.

The views in protocol $\Pi_{\text{PVODTE}}$ extend those in protocol $\Pi_{\text{ODTE}}$ with MAC-related messages: The view of the client $\mathcal{U}$ during an execution of protocol $\Pi_{\text{PVODTE}}$ is defined as:
${\mathbf{View}}_{\mathcal{U}}^{\Pi_{\text{PVODTE}}} = (A, 
\pk,\mathbf{x}, \C_{\cal M}, {\rm output}_0^{\Pi_{\text{PVODTE}}}, {\rm output}_1^{\Pi_{\text{PVODTE}}})$, 
where $A$ is the secret MAC key generated by the client, and 
${\rm output}_\sigma^{\Pi_{\text{PVODTE}}}=(\{ 
    \langle pc_{\pi(i)}^* \rangle_{\sigma},
    \langle v_{\pi(i)}^* \rangle_{\sigma},
    \langle w_{\pi(i)}^* \rangle_{\sigma}
    \}_{i=1}^k)$. 
    The view of server $\mathcal{S}_\sigma$ during an execution of protocol $\Pi_{\text{PVODTE}}$ comprises:
${\bf View}_{\calS_\sigma}^{\Pi_{\text{PVODTE}}}=\{{\bf View}_{\calS_\sigma}^{\Pi_{\text{ODTE}}},\C_A\}$,
  where $\C_A$ is the ciphertext of the MAC key $A$.

\begin{definition}[Malicious Security against Servers]
The protocol $\Pi_{\text{\em PVODTE}}$ securely realizes the ideal functionality $f^{\text{\em PVODTE}}$ in the presence of malicious servers if:
\begin{itemize}
    \item \textbf{Privacy:} 
    There exist $\ppt$ simulators
     ${\bf Sim}_{\mathcal{U}}$, ${\bf Sim}_{\mathcal{S}_0}$, ${\bf Sim}_{\mathcal{S}_1}$, 
  such that $
  {\bf View}_{\mathcal{U}}^{\Pi_{\text{\em PVODTE}}} \approx_c {\bf Sim}_{\mathcal{U}}({\bf x}, \mathcal{T}({\bf x}))$, and 
  ${\bf View}_{\mathcal{S}_\sigma}^{\Pi_{\text{\em PVODTE}}} \approx_c{\bf Sim}_{\mathcal{S}_\sigma}, \sigma\in \bin$.
    \item \textbf{Verifiability:} For any malicious adversary $\mathcal{A}$ controlling the servers, the probability that the client outputs an incorrect result $y' \neq \mathcal{T}(\mathbf{x})$ and $y' \neq \bot$ is negligible.
\end{itemize}
\end{definition}

\begin{theorem}
  The protocol $\Pi_{\text{\em PVODTE}}$ securely computes the ideal functionality $f^{\text{\em PVODTE}}$
  in the presence of malicious adversaries.
\end{theorem}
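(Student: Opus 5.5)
We split the argument along the two clauses of the definition of malicious security against servers: privacy and verifiability.

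\textbf{Privacy.} We reuse the simulators from the proof of the semi-honest theorem. A server never sends anything to the other server and receives nothing in response, so a malicious server's view is a deterministic function of its incoming messages and $\ek_\sigma$, regardless of how it deviates. That view is ${\bf View}_{\calS_\sigma}^{\Pi_{\text{ODTE}}}$ together with $\C_A$. The simulator ${\bf Sim}_{\mathcal{S}_\sigma}$ therefore generates fresh HSS keys and encrypts $mt+k+nt+1$ dummy values, the extra one standing in for $\C_A$. Indistinguishability follows by a hybrid argument that replaces the ciphertexts one at a time, each step reducing to the \emph{security} of $\sf HSS$.

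For the client, ${\bf Sim}_{\mathcal{U}}$ proceeds as in the semi-honest proof and additionally samples $A$ itself. For each $i$ it picks $\langle w_i\rangle_0'$ uniformly and sets $\langle w_i\rangle_1'=\langle w_i\rangle_0'+A\cdot v_i \bmod N$, using the simulated $v_i$ (so $v_\eta=\mathcal{T}(\mathbf{x})$ and the others are random). This is consistent with the relation $A\cdot v_i^*=w_i^*$ that holds in the real protocol by correctness of $\sf HSS.Mul$ and $\sf HSS.Output$. Since the client is semi-honest, this is the only case to handle.

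\textbf{Verifiability.} Suppose, without loss of generality, that $\calS_b$ is malicious and $\calS_{1-b}$ is honest. The malicious server may send arbitrary shares, which we write as $\langle\cdot\rangle_b+\Delta$. The offsets $\Delta_{pc,i},\Delta_{v,i},\Delta_{w,i}$ are chosen by the adversary as a function of its view only. The client accepts a value $y'\neq \mathcal{T}(\mathbf{x})$ in one of two cases:
\begin{itemize}
\item it selects the correct index but $\Delta_{v}\neq 0$; or
\item some other index now decodes to $pc^*=0$.
\end{itemize}
In either case, acceptance requires $A\cdot y' = w' $, i.e.\ $A(v+\Delta_v)= w+\Delta_w$. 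Since $A v=w$ for the honest values, this reduces to $A\Delta_v=\Delta_w$ (up to sign conventions).

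We then argue that this equation holds with only negligible probability. The plan is to first replace $\C_A$ in the adversary's view by an encryption of $0$; by HSS security this changes the adversary's success probability by at most $\negl$. In that hybrid, $A$ is uniform over a large domain and independent of the adversary's view. Conditioned on $\Delta_v\neq 0$ (modulo a prime, or restricting to invertible $\Delta_v$ in $\ZZ_N$), the equation $A\Delta_v=\Delta_w$ determines $A$, so it holds with probability at most $1/|\text{domain}|$ (roughly $1/p$ for the smallest prime factor $p$ of $N$, which is negligible). A union bound over the $k$ indices completes the case analysis.

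If $\Delta_v=0$ but a wrong index is selected, the accepted label is $v_{i'}^*=v_{i'}+r_{i',1}pc_{i'}$. Here $pc_{i'}\neq 0$ and $r_{i',1}$ is unknown to the adversary's ciphertexts but derived from a PRF key. The obstacle is that $r_{i',1}$ is held by the malicious server itself. So instead we note that the MAC check still binds $w$ to $A\cdot v_{i'}^*$, and $v_{i'}^*$ is not $\mathcal{T}(\mathbf{x})$. This attack therefore changes the output without breaking the MAC, and must be excluded by arguing that flipping $pc$ to zero at a wrong index while leaving the true index nonzero is itself hard. Doing so requires the adversary to guess the honest share difference, which is hidden by HSS.

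\textbf{Main obstacle.} I expect this last case to be the hardest step. The paper only asserts that ``the probability of a malicious server manipulating the path costs is negligible.'' I would try to formalize it by showing that the adversary's view is independent of $\pi^{-1}$ of the zero position (by HSS security plus pseudorandomness of the masks). Then any $\Delta_{pc}$ that zeroes a wrong entry must hit an entry whose plaintext value is unknown to the adversary, and it must do so without creating two zero entries; the client should abort if two entries decode to zero.
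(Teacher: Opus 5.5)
Your case split is the paper's. Your case $\Delta_v\neq 0$ is its Attack~2, and your treatment of it (replace $\C_A$ by an encryption of $0$, then bound the probability that $A\Delta_v=\Delta_w$) is a cleaner version of its ``the adversary could solve for $A$'' contradiction. Your privacy simulators also supply the MAC handling that the paper only mentions. The gap is the remaining case: a wrong index decodes to a zero path cost and $\Delta_v=0$. You correctly flag it as the crux, but your proposed argument (the adversary must guess an honest share difference that HSS hides) fails. The key ${\sf k_{prf}}$ is part of $\ek_\sigma$ and must be the same for both servers, since otherwise $\langle pc_i^*\rangle_1-\langle pc_i^*\rangle_0=r_{i,0}\,pc_i$ would not hold. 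So $\calS_b$ knows every $r_{i,0}$, just as you observed it knows $r_{i,1}$, and it can compute its own honest share. The difference it has to hit is therefore $r_{i,0}\,pc_i$ with $pc_i\in\{1,\dots,h\}$ for $i\neq\eta$. HSS keeps $pc_i$ secret, but $pc_i$ takes at most $h$ values. Hiding the position of the zero does not help, because the attack below never needs it.

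Concretely, $\calS_b$ picks $j\leftarrow[k]$ and $c\leftarrow[h]$, and changes its share of $pc_j^*$ so that index $j$ decodes to $r_{j,0}(pc_j-c)$. It adds a fresh random offset to every other path-cost share, so every other index, including $\eta$, decodes to a nonzero value except with negligible probability; this also defeats an ``abort on two zeros'' rule. It sends all $v^*$ and $w^*$ shares honestly. With probability $(1-1/k)/h$ we have $pc_j=c$. The client then finds a unique zero at $j$ and outputs $v_j^*=v_j+r_{j,1}pc_j$, which differs from $\mathcal{T}(\mathbf{x})$ except with negligible probability. The MAC check passes because $w_j^*=A\cdot v_j^*$ was computed honestly. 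So this case cannot be closed, and verifiability does not hold, for the protocol as specified.

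The paper's Attack~1 uses exactly your argument: it asserts that $\langle pc_j\rangle_1$ looks uniform to $\adv$, giving $1/N$ per index. It is wrong for the same reason, so there is no idea in the paper's proof that you are missing. Closing the case requires a protocol change. For example, the servers could also output shares of $u_i^*=A\cdot pc_i^*$, computed by ${\sf HSS.Mul}(\C_A,\M_{pc_i^*,\sigma})$, and the client could check that $u$ reconstructs to $0$ at the selected index. A forged zero at $j\neq\eta$ would then require $A\Delta_{pc}=\Delta_u$ with $\Delta_{pc}=pc_j^*\neq 0$, and your MAC bound applies unchanged.
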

\begin{proof} 

The privacy proof mirrors Section~\ref{sec:semi1} with added MAC handling, so we focus on verifiability. 
Given the symmetry between servers in $\Pi_{\text{PVODTE}}$, it suffices to assume $\adv$ corrupted $\calS_0$. 
Let the true result be $v_\eta={\cal T}({\bf x})$ at index $\eta$ with $pc^*_\eta=0$. 
To force an incorrect output ${\cal T}'({\bf x}) \neq {\cal T}({\bf x})$, $\adv$ must attempt the following attacks:

\noindent\textbf{Attack 1: Forge a zero path cost at index $j \neq \eta$.}
The adversary modifies $\langle pc_{j}^{*} \rangle_{0}$ to $\langle pc_{j}^{*} \rangle_{0}'$ aiming for $\langle pc_{j}^{*} \rangle_{1} - \langle pc_{j}^{*} \rangle_{0}' = 0$.
However, for $j \neq \eta$, the value $pc_j \in \ZZ_N^*$ is uniformly random. 
Supported by the semantic security of HSS and the security  of PRFs, the remote share $\langle pc_j \rangle_1$ remains indistinguishable from a uniform random value over $\ZZ_N$  to $\adv$.
Therefore, $\adv$ must guess $\langle pc_j \rangle_0' = \langle pc_j \rangle_1$ with probability at most $1/N$ per index. 
By a union bound over all $k-1$ indices:
$
\Pr[\text{Attack 1 succeeds}]=\frac{k-1}{N}(1-\frac{1}{N})^{k-1} < \frac{k-1}{N} = \negl$ considering $N \gg k$.

\noindent\textbf{Attack 2: Forge the classification value at the correct index.}
We prove by contradiction. Suppose $\adv$ successfully forges a classification result ${\cal T}'({\bf x}) \neq {\cal T}({\bf x})$ that passes the verification. 
For the correct classification result at index $\eta$, the MAC verification equation is:
$
A \cdot v_\eta = w_\eta$,
where $v_\eta = \langle v_\eta \rangle_1 - \langle v_\eta \rangle_0 = {\cal T}({\bf x})$ and $w_\eta = \langle w_\eta \rangle_1 - \langle w_\eta \rangle_0$.
If $\adv$ successfully forges a result by modifying its shares to $\langle v_\eta \rangle_0'$ and $\langle w_\eta \rangle_0'$, the forged values must satisfy:
$A \cdot v'_\eta = w'_\eta$,
where $v'_\eta = \langle v_\eta \rangle_1 - \langle v_\eta \rangle_0' = {\cal T}'({\bf x}) \neq {\cal T}({\bf x})$ and $w'_\eta = \langle w_\eta \rangle_1 - \langle w_\eta \rangle_0'$.
From these two equations, we have:
$
A \cdot (v'_\eta - v_\eta) = w'_\eta - w_\eta$.
Since $v'_\eta = {\cal T}'({\bf x}) \neq {\cal T}({\bf x}) = v_\eta$, we have $v'_\eta - v_\eta \neq 0$. Therefore, $\adv$ can compute the MAC key as:
$
A = \frac{w'_\eta - w_\eta}{v'_\eta - v_\eta} = \frac{\langle w_\eta \rangle_0 - \langle w_\eta \rangle_0'}{\langle v_\eta \rangle_0 - \langle v_\eta \rangle_0'}.
$
Note that all terms in the right-hand side are known to $\adv$.
However, this contradicts the security of $\mathsf{HSS}$. By the security of $\mathsf{HSS}$, the ciphertext $\C_A$ should not reveal any information about $A$ to $\adv$. 
Specifically, $A$ should remain computationally indistinguishable from a uniformly random value in $\ZZ_N$ from $\adv$'s view. If $\adv$ could compute $A$ as shown above with non-negligible probability, this would contradict the security of $\mathsf{HSS}$.
Therefore:
$\Pr[\text{Attack 2 succeeds}] = \negl$.

\end{proof}

\begin{algorithm}[t]
  \caption{$ (\{ \langle pc_{\pi(i)}^* \rangle_{\sigma},\langle v_{\pi(i)}^* \rangle_{\sigma},\langle w_{\pi(i)}^* \rangle_{\sigma}
  \}_{i=1}^k )  
  \leftarrow
{\sf VRG_{GBDT}}(\sigma,$ $\ek_\sigma,\pi, 
\{ \M_{b_j,\sigma} \}_{j=1}^m,
\{\C_{v_i} \}_{i=1}^k,\C_A,C_\eta,r_j,r_j')$}\label{alg:vrg_gbdt}
\begin{algorithmic} 
  \State {\bf Remark.} We use ``$\dots$'' to omit the unchanged inputs, outputs and computation steps in {\bf Algorithm \ref{alg:vrg}}.
	\State  {\bf Input}.
    (1) $\dots$;
    (7) $\C_{\eta}$:
    the HSS ciphertexts of the learning rate $\eta$.
    (8) $r_j$:
    A random mask for classification labels in the tree $\mathcal{T}_j$.
    (9) $r_j'$:
    A random mask for proofs in the decision tree $\mathcal{T}_j$.
	\State {\bf Output}.
  (1) $\dots$; 
  (2) $\{ \langle \mu_{\pi(i)}^* \rangle_{\sigma}  \}_{i=1}^k$: the weighted-outputs of the randomized classification labels.
  (3) $\{ \langle \tau_{\pi(i)}^* \rangle_{\sigma}  \}_{i=1}^k$: the weighted-proofs of the randomized classification labels.
  \State {\bf Step 1-6}. $\dots$

    \State {\bf Step 7}. Compute the weighted-output of the randomized classification label as: $
 \M_{\mu_i^*,\sigma} \leftarrow
{\sf HSS.Mul}(\C_\eta,\M_{v_i^*,\sigma})$.

  \State {\bf Step 8}. Compute the weight-proofs of the randomized classification label as: $
 \M_{\tau_i^*,\sigma} \leftarrow
{\sf HSS.Mul}(\C_\eta,\M_{w_i^*,\sigma})$.

\State {\bf Step 9}. Convert the memory values in {\bf Step 7} and {\bf Step 8} into linear shares: 
\begin{displaymath}
  \begin{aligned}
    \langle pc_i^* \rangle_\sigma & \leftarrow
{\sf HSS.Output}(\M_{pc_i^*,\sigma},n_{\sf out}), \\ 
\langle \mu_i^* \rangle_\sigma & \leftarrow
{\sf HSS.Output}(\M_{\mu_i^*,\sigma},n_{\sf out}), \\
\langle \tau_i^* \rangle_\sigma & \leftarrow
{\sf HSS.Output}(\M_{\tau_i^*,\sigma},n_{\sf out}).
  \end{aligned}
\end{displaymath}
\State {\bf Step 10}. Add random masks $r_j$, $r_j'$ to the linear shares of the weighted outputs and weighted proofs: $
      \langle \mu_i^* \rangle_\sigma \leftarrow 
  \langle \mu_i^* \rangle_\sigma + r_j$, $ 
      \langle \tau_i^* \rangle_\sigma \leftarrow
  \langle \tau_i^* \rangle_\sigma + r_j'$.

\State  {\bf Step 11}.
Apply the  permutation $\pi$ to the values in {\bf Step 10}, output 
\begin{equation}
  \label{eqn:outvGBDT}
  \left(\{ 
    \langle pc_{\pi(i)}^* \rangle_{\sigma},
    \langle \mu_{\pi(i)}^* \rangle_{\sigma},
    \langle \tau_{\pi(i)}^* \rangle_{\sigma},
    \}_{i=1}^k 
   \right).
  \end{equation}
\end{algorithmic}
\end{algorithm}

\section{Further Investigations} 
In this section, we extend $\sf PVODTE$ to support two practical inference scenarios~\cite{wu2015}: Gradient Boosted Decision Trees and decision trees using categorical features.

\subsection{Extension to Gradient Boosted Decision Trees}
\label{sec:Forest}
A gradient boosted decision tree (GBDT)~\cite{FRIEDMAN2002367} is an ensemble learning model comprising sequentially trained decision trees, where the final prediction is a weighted sum of outputs from all constituent trees (each scaled by a small learning rate $\eta$). Formally, given a GBDT model $\mathcal{T}_{\rm GBDT}$ with $s$ trees $\{\mathcal{T}_1, \ldots, \mathcal{T}_s\}$, the prediction for an input $\mathbf{x}$ is: $
  \mathcal{T}_{\rm GBDT}(\mathbf{x}) = T_0 + \eta \sum_{j=1}^s \mathcal{T}_j(\mathbf{x})$, 
where $T_0$ is the initial constant bias (estimated from training data). 
For simplicity, we treat $T_0$ as the output of a dummy tree $\mathcal{T}_0$.
To extend $\sf PVODTE$ to GBDTs, we employ an additive masking technique to preserve the privacy of individual tree predictions while ensuring the final prediction remains unbiased. 
The core idea is to inject random masks into each tree's output such that: (1) individual tree predictions are hidden from the client, and (2) the sum of masks cancels out (i.e., $\sum_{j=0}^s r_j=\sum_{j=0}^s r_j' = 0$, where $r_j$, $r_j'$ are the masks for tree $\mathcal{T}_j$).

The extended protocol for GBDTs modifies the original $\sf PVODTE$ workflow as follows:

\noindent{\bf Input preparation phase.} The model provider encrypts the initial bias $T_0$ and learning rate $\eta$ using the $\sf HSS.Input$ algorithm, generating ciphertexts $\C_{T_0}$ and $\C_\eta$. These, along with the encrypted GBDT model, are sent to servers $\mathcal{S}_0$ and $\mathcal{S}_1$. 

\noindent{\bf Server-side computation phase.} Instead of executing the single-tree $\sf VRG$ algorithm, 
each server $\mathcal{S}_\sigma$ runs a modified $\sf VRG_{GBDT}$ algorithm (detailed in Algorithm \ref{alg:vrg_gbdt}) with inputs $\C_{T_0}$, $\C_\eta$, and two random masks $r_j$, $r_j'$ 
(satisfying $\sum_{j=0}^s r_j = \sum_{j=0}^s r_j' =0$). 
After executing $\sf VRG_{GBDT}$ for each tree $\mathcal{T}_j$, each server $\mathcal{S}_\sigma$ additionally masks and sends the secret share of $T_0$ to the client. The share is derived as: 
$ \M_{T_0,\sigma} \leftarrow \mathsf{HSS.ConvertInput}(\ek_\sigma,\C_{T_0})$, $ \langle T_0 \rangle_\sigma \leftarrow {\sf HSS.Output}(\M_{T_0,\sigma},n_{\sf out})$, and then masked as $\langle T_0 \rangle_\sigma \leftarrow \langle T_0 \rangle_\sigma + r_0$. 
For correctness verification of $T_0$, the servers compute: 
$\M_{T_0',\sigma} \leftarrow \mathsf{HSS.Mul}(\C_A, \M_{T_0,\sigma})$, $\langle T_0' \rangle_\sigma \leftarrow {\sf HSS.Output}(\M_{T_0',\sigma},n_{\sf out})$, and $\langle T_0' \rangle_\sigma \leftarrow \langle T_0' \rangle_\sigma + r_0'$.

\noindent{\bf Result reconstruction phase.} 
  The client receives subtractive secret shares of randomized path costs, weighted classification labels, and weighted proofs from both servers for each tree $\mathcal{T}_j$: $
  \{\{\langle pc_{\pi(i)}^* \rangle_{\sigma}, 
  \langle \mu_{\pi(i)}^* \rangle_{\sigma},
  \langle \tau_{\pi(i)}^* \rangle_{\sigma}
  \}_{i=1}^k\}_{\sigma=0}^1$
 from $\mathcal{S}_0$ and $\mathcal{S}_1$.  
 To identify the correct output for each decision tree $\mathcal{T}_j$, the client finds a
 $\pi(i)\in[k]$ such that
$\langle pc_{\pi(i)}^* \rangle_{1} - \langle pc_{\pi(i)}^* \rangle_{0}=0$, 
and then obtains the weighted-output 
$\eta \cdot {\cal T}_j({\bf x})$, 
and its proof $A \cdot \eta \cdot {\cal T}_j({\bf x})$ as: $
  \eta \cdot {\cal T}_j({\bf x}) = \langle \mu_{\pi(i)}^* \rangle_{1} - \langle \mu_{\pi(i)}^* \rangle_{0}$ and
$A \cdot \eta \cdot {\cal T}_j({\bf x}) = \langle \tau_{\pi(i)}^* \rangle_{1} - \langle \tau_{\pi(i)}^* \rangle_{0}$. 
The final GBDT prediction is then computed by aggregating the reconstructed tree outputs and the initial bias:
$
  {\cal T}_{\rm GBDT}({\bf x}) =\langle T_0 \rangle_1 - \langle T_0 \rangle_0  + \cdot \sum_{j=1}^s \eta \cdot {\cal T}_j({\bf x})$.
To verify correctness, the client checks that the aggregated proof matches the product of $A$ and the final prediction: $
  A\cdot {\cal T}_{\rm GBDT}({\bf x}) =\langle T_0' \rangle_1 - \langle T_0' \rangle_0  + \cdot \sum_{j=1}^s A\cdot \eta \cdot {\cal T}_j({\bf x})$.

The correctness of the extended protocol follows directly from the correctness of the $\sf PVODTE$ and the linearity of the HSS operations used for aggregation.
The privacy of $T_0$ and $\eta$ is preserved by the \emph{security} of the HSS scheme, ensuring they remain private from the servers. 
For the client, only the final GBDT prediction and its proof are revealed—aligning with $\sf PVODTE$'s security requirements. 
Individual tree predictions are obscured by random masks $r_j$, $r_j'$, which cancel out in the aggregated sum, preventing the client from recovering intermediate tree outputs.

\begin{algorithm}[t]
  \caption{$\M_{c_{t},\sigma}\leftarrow {\sf SEQ}(\sigma, \ek_\sigma, \{\C_{\alpha_i}\}_{i=1}^t, \{\C_{\beta_i}\}_{i=1}^t)$}  \label{alg:seq}
\begin{algorithmic}
\State  {\bf Input}. (1)
    $\sigma\in \{0,1\}$: the index of a  server ${\cal S}_\sigma$;
  (2) $\ek_\sigma$: the private evaluation key of HSS;
  (3) $\C_{\alpha_i}$ (resp. $\C_{\beta_i}$): a ciphertext  of the bit  $\alpha_i$ (resp. $\beta_i$)
  under  ${\sf HSS}$, where $1\leq i\leq t$.
\State {\bf Output}.  $\M_{c_{t},\sigma}$: a memory value of the bit $c_{t}$.
\State {\bf Step 1}. $\calS_\sigma$ generates a trivial memory value ${\sf M}_{1,\sigma}$;
\State {\bf Step 2}. ${\sf M}_{\alpha_1,\sigma} \leftarrow {\sf HSS.ConvertInput}(\ek_\sigma,\C_{\alpha_1})$;
\State {\bf Step 3}. ${\sf M}_{\beta_1,\sigma} \leftarrow {\sf HSS.ConvertInput}(\ek_\sigma,\C_{\beta_1})$;
\State  {\bf Step 4}. ${\sf M}_{c_1,\sigma} \leftarrow {\sf HSS.Mul}(\C_{\beta_1},\M_{\alpha_1,\sigma})$;
\State  {\bf Step 5}. ${\sf M}_{c_1,\sigma} \leftarrow {\sf HSS.Add}({\sf M}_{c_1,\sigma},{\sf M}_{c_1,\sigma})$;
\State  {\bf Step 6}. ${\sf M}_{c_1,\sigma} \leftarrow {\sf HSS.Sub}({\sf M}_{c_1,\sigma},{\sf M}_{\alpha_1,\sigma})$;
\State  {\bf Step 7}. ${\sf M}_{c_1,\sigma} \leftarrow {\sf HSS.Sub}({\sf M}_{c_1,\sigma},{\sf M}_{\beta_1,\sigma})$;
\State  {\bf Step 8}. ${\sf M}_{c_1,\sigma} \leftarrow {\sf HSS.Add}({\sf M}_{c_1,\sigma},{\sf M}_{1,\sigma})$;
\State  {\bf Step 9}.
{\bf for $i\in \{1,\ldots,t-1\}$ do}
\State \hspace{15mm}
  $\C_{temp} \leftarrow {\sf HSS.Add}(\C_{\alpha_{i+1}},\C_{\beta_{i+1}} )$;
  \State \hspace{15mm}
  ${\sf M}_{c_{i+1},\sigma} \leftarrow {\sf HSS.Mul}(\C_{temp},{\sf M}_{c_i,\sigma})$;
  \State \hspace{15mm}    ${\sf M}_{c_{i+1},\sigma} \leftarrow {\sf HSS.Sub}({\sf M}_{c_i,\sigma}, \M_{c_{i+1},\sigma})$;
\State \hspace{15mm}    ${\sf M}_{2c_i,\sigma} \leftarrow {\sf HSS.Add}({\sf M}_{c_i,\sigma}, {\sf M}_{c_i,\sigma})$;
\State \hspace{15mm}    ${\sf M}_{temp,\sigma} \leftarrow {\sf HSS.Mul}(\C_{\alpha_{i+1}},{\sf M}_{2c_i,\sigma})$;
\State \hspace{15mm}    ${\sf M}_{temp,\sigma} \leftarrow {\sf HSS.Mul}(\C_{\beta_{i+1}},{\sf M}_{temp,\sigma})$;
\State \hspace{15mm}   ${\sf M}_{c_{i+1},\sigma} \leftarrow {\sf HSS.Add}({\sf M}_{c_{i+1},\sigma}, {\sf M}_{temp,\sigma})$;
\State {\bf Step 10}. Output ${\sf M}_{c_{t},\sigma}$.
\end{algorithmic}
\end{algorithm}

\subsection{Extension to Categorical Features}
In practice, feature vectors $\bf x$ often include categorical variables (e.g., ``gender'' or ``product type'') instead of numerical ones. 
For such variables, the natural branching operation at a decision node (i.e., the Boolean test function) is \emph{set membership testing} (e.g., ``is $x$ in the set $S$?'')~\cite{wu2015}. 
 
Consider a categorical variable $x$ in $\bf x$ corresponding to a non-leaf node $D_j$, where $x$ takes values from a finite set $S = \{s_1, \ldots, s_\ell\}$. 
The Boolean test function at $D_j$ checks whether $x \in S$ holds. 
Translating this into our $\sf PVODTE$ requires a construction that: (i) Takes as input the HSS ciphertext $\C_x$ of $x$ and the HSS ciphertexts $\{\C_{s_1}, \ldots, \C_{s_\ell}\}$ of elements in $S$.  
(ii) Outputs a memory value of $1$ if $x = s_j$ for any $s_j \in S$, and $0$ otherwise.  
 
Note that we have already designed a ``greater than'' testing algorithm (Algorithm \ref{alg:sic}) for bitwise integer comparison (Section \ref{sec:sic}). 
This bit-level approach extends naturally to set membership testing: we compare $x$ sequentially with each element in $S$, and the final result is the sum of these pairwise equality tests.  
 
\noindent{\bf Secure equality test algorithm.}
To implement set membership testing, we first design a \emph{secure equality test algorithm} ($\sf SEQ$) to compare two $t$-bit variables $\alpha = (\alpha_1, \ldots, \alpha_t)$ and $\beta = (\beta_1, \ldots, \beta_t)$ (where $\alpha_1$, $\beta_1$ are least significant bits (LSBs)). 
Recall that for any two bits $\alpha_i, \beta_i \in \{0,1\}$, the equality test can be expressed via arithmetic over $\ZZ$ as:  
\begin{equation}
  (\alpha_i = \beta_i) := 1 - \alpha_i - \beta_i + 2\alpha_i\beta_i,
  \label{eq:eq}
\end{equation}  
where $(\alpha_i = \beta_i):=1$ if $\alpha_i = \beta_i$ and $0$ otherwise.  

Using an \emph{LSB approach}, we iteratively construct the equality result for the full $t$-bit variables. 
Let $c_0 = 1$, and let $c_i$, $1\leq i \leq t$ denote the equality result for the first $i$ bits ($\alpha_i \ldots \alpha_1 = \beta_i \ldots \beta_1$). 
The recursive relation for $c_{i+1}$ is:  
\begin{equation}
  c_{i+1} := (\alpha_{i+1} = \beta_{i+1}) \wedge c_i, \quad 0 \leq i < t,
  \label{eq:lsbbeq}
\end{equation}  
where $c_t = 1$ if and only if $\alpha = \beta$. 
By substituting Eq.~\eqref{eq:gt} and Eq.~\eqref{eq:eq} into Eq.~\eqref{eq:lsbbeq}, 
we obtain:  
\begin{displaymath}
  c_{i+1} := c_i - c_i(\alpha_{i+1} + \beta_{i+1}) + 2c_i\alpha_{i+1}\beta_{i+1}, \quad 1 \leq i < t.
\end{displaymath}  
Similar to the $\sf SIC$ algorithm (Algorithm~\ref{alg:sic}), we design $\sf SEQ$ (Algorithm~\ref{alg:seq}) to implement this equation. 
The $\sf SEQ$ algorithm also keeps the privacy of the input $x$, the set $S$ and the equality test result from the servers.

\noindent\textbf{Set membership testing via $\sf SEQ$.}
For a categorical variable $x$ and set $S = \{s_1, \ldots, s_\ell\}$, each server $\mathcal{S}_\sigma$: 
(1) Run $\sf SEQ$ for each $s_j \in S$ to compute a memory value $\M_{c_j,\sigma}$ ($c_j=1$ if $x = s_j$, $0$ otherwise).  
(2) Sum the memory values of $c_j$ to obtain the set membership result:  
$\M_{c_S,\sigma} = \sum_{j=1}^\ell \M_{c_j,\sigma}$, 
where $c_S = 1$ if $x \in S$ and $0$ otherwise. 

\begin{table}[t]
  \renewcommand{\arraystretch}{1.1}
  \centering
  \caption{
    Tree Parameters in Our Experiments.
  }\label{tab:tree}
  \begin{tabular}{c||c|c|c}
    \hline
    Decision Tree & \#Features & Tree height & \#Nodes \\
    \hline
    \textsf{heart-disease} & 13 & 3 & 15  \\
    \textsf{breast-cancer} & 9 & 8 & 511 \\
    \textsf{housing} & 13 & 13 & 16383 \\
    \textsf{spambase} & 57 & 17 & 262143 \\ 
    \textsf{MNIST} & 784 & 20 & 2.1E+06 \\
    \hline
  \end{tabular}
  \vspace{2mm}
  \centering
  \caption{Comparison of running time of $\sf PVODTE_{MH}$ and $\sf PVODTE_{MS}$} \label{tab:vsmhss}
  \renewcommand{\arraystretch}{1.2}
  \resizebox{1\columnwidth}{!}{
    \begin{threeparttable}
    \begin{tabular}{c|cc|cc}
    \toprule
    \multirow{2}{*}{\textbf{ Decision tree}}
  & \multicolumn{2}{c|}{$\sf SIC$ (s)} & \multicolumn{2}{c}{$\sf VRG$ (s)} \\ 
   & $\sf PVODTE_{MH}$ & $\sf PVODTE_{MS}$ & $\sf PVODTE_{MH}$ & $\sf PVODTE_{MS}$ \\
  \hline
  \textsf{heart-disease} & 4.53 & 2.18  & 0.14 & 0.07 \\
  \textsf{breast-cancer} & 4.41 & 2.23  & 2.82 & 1.30 \\
  \textsf{housing} & 4.10 & 2.25 & 87.16 & 42.53 \\
  \textsf{spambase} & 4.34 & 2.12 & 1352.13 & 670.26 \\ 
  \textsf{MNIST} & 4.21  & 2.29 & 11139.41 & 5373.95\\ 
  \bottomrule
\end{tabular}
\end{threeparttable}
  }
  \vspace{-4mm}
\end{table}

\section{Experiments} 
\label{sec:expe}

In this section, we provide detailed evaluation results for $\sf PVODTE$.

\noindent\textbf{Dataset and model.} 
These datasets have been used in prior work \cite{wu2015,zheng2023,zheng2022,bai2023} to evaluate ODTE protocols. 
For each dataset, we train a decision tree using the scikit-learn toolkit \cite{pedregosa2011}, which allows setting the maximum depth of the resulting tree. Following standard practice \cite{kiss2019,wu2015,zheng2023,zheng2022}, we pad the tree with dummy nodes to reach the target depth.
We train six decision-tree models, as shown in TABLE \ref{tab:tree}. Among them, \textsf{heart-disease} and \textsf{breast-cancer} represent small trees, while \textsf{housing} and \textsf{spambase} are medium-sized trees with a moderate number of features and classes. \textsf{MNIST} is a deep tree with a high-dimensional feature vector; we include it primarily as a benchmark, though decision trees are not commonly used for classification tasks in practice.

    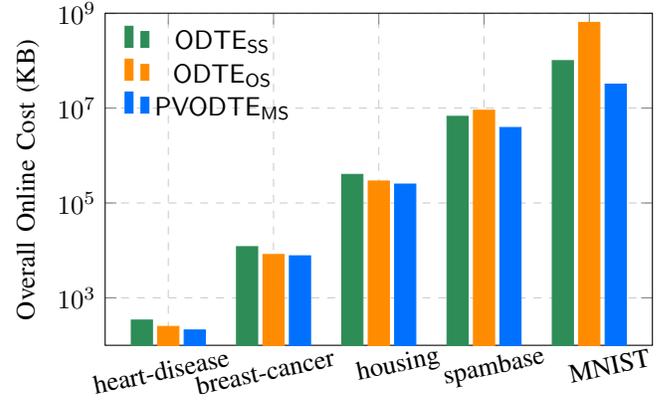
\begin{figure}[t] 
    \centering \begin{tikzpicture} 
      \begin{axis}[ 
        width=\columnwidth, height=6cm, ybar, bar width=8pt, enlarge x limits=0.15, 
        ylabel={Overall Online Cost (KB)}, 
        ymode=log, 
        log basis y=10, ymin=1e2, ymax=1e9, 
        symbolic x coords={ heart, breast, housing, spam, MNIST }, 
        xtick=data, 
        xticklabels={ heart-disease, breast-cancer, housing, spambase, MNIST }, 
        x tick label style={rotate=12, anchor=east,xshift=25pt,yshift=-9pt}, 
        tick align=inside,
        legend style={ at={(0.02,0.98)}, 
        anchor=north west, 
        draw=none, fill=none }, 
        grid=both, major grid style={dashed, gray!40}, ] \addplot[draw=seagreen,fill=seagreen] coordinates { (heart,3.4e2) (breast,1.2e4) (housing,4.0e5) (spam,6.7e6) (MNIST,1.0e8) }; 
        \addplot[draw=darkorange,fill=darkorange] coordinates { (heart,2.5e2) (breast,8.2e3) (housing,2.9e5) (spam,9.0e6) (MNIST,6.4e8) }; 
        \addplot[draw=brandeisblue,fill=brandeisblue] coordinates { (heart,2.1e2) (breast,7.7e3) (housing,2.5e5) (spam,3.9e6) (MNIST,3.2e7) }; 
        \legend{ $\sf ODTE_{SS}$, $\sf ODTE_{OS}$, $\sf PVODTE_{MS}$ } 
      \end{axis} 
    \end{tikzpicture} 
    \caption{Comparison of overall online communication overhead of $\sf ODTE_{SS}$ \cite{zheng2022}, $\sf ODTE_{OS}$ \cite{zheng2023} and our $\sf PVODTE_{MS}$. The y-axis is in logarithmic scale.} \label{fig:overall-online} 
    \vspace{-4mm} 
  \end{figure}
  
\noindent\textbf{Implementation setup.} 
We implement our semi-honestly secure ODTE protocol and maliciously secure ODTE protocol, 
denoted as $\sf ODTE_{SH}$ and  $\sf PVODTE_{MS}$, respectively. 
Both protocols use a semi-honest ElGamal-based HSS scheme from \cite{abram2022}, which satisfies the definitions of the HSS scheme in Section \ref{sec:HSS_Paillier}.
To evaluate its performance, we compare our $\sf PVODTE_{MS}$ with three alternative protocols:
\begin{enumerate}
  \item A maliciously secure ODTE protocol, denoted as $\sf PVODTE_{MH}$, constructed by combining our semi-honest ODTE protocol with the maliciously secure HSS scheme from \cite{abram2022}. 
  \item The state-of-the-art two-server semi-honest ODTE protocol, denoted as $\sf ODTE_{SS}$, proposed by Zheng et al. \cite{zheng2022}, which is based on a secret sharing scheme.
  \item The optimized version of the protocol of \cite{zheng2022}, denoted as $\sf ODTE_{OS}$, proposed by Zheng et al. \cite{zheng2023}, which is designed to reduce the S2S rounds in \cite{zheng2022}.
\end{enumerate} 
All protocols are implemented in C++ using the NTL library \cite{shoup2011ntl}.
We set a feature or a threshold value to be a random $t=10$ bit integer, 
which is a common choice in existing experiments \cite{lu2018,xu2023}. 
Our protocol's output modulus and the message spaces of $\sf ODTE_{SS}$ and $\sf ODTE_{OS}$ are all set to 128 bits, while the RSA modulus in our protocol is set to $3072$ bits. 
We implement the OT protocol employed  in \cite{zheng2023} as a classic OT based on RSA encryption (the RSA modulus is same as ours), since the authors do not provide the details of the OT protocol they employ. 

To test the performance of our protocols in a genuine heterogeneous network setting, we conduct benchmarks on a workstation equipped with Intel(R) Xeon(R) Gold 6250 CPU @ 3.90GHz running Ubuntu 22.04 LTS with 64GB memory. 
We employ the Linux tc tool to simulate local-area network (LAN,
RTT: 0.1 ms, 1 Gbps), and wide-area network (WAN, RTT: 160 ms, 135 Mbps). 
This process allows us to learn the real-world performance of our protocol during model inference, such as actual online communication overhead and running time. 
Our source code is available at \url{https://github.com/neuroney/PODT}.

\subsection{Lightweight MAC for Malicious Security: Comparison with $\sf PVODTE_{MH}$} 
To demonstrate the lightweight efficiency of our approach, 
we compare ${\sf PVODTE_{MS}}$ against a baseline protocol ${\sf PVODTE_{MH}}$ built directly on maliciously secure HSS. As detailed in Section \ref{sec:mhss}, major distinctions in MAC key preparation, memory value size, and server-side computation translate precisely into the performance differences observed below.

\noindent\textbf{Communication overhead.}
The primary divergence stems from MAC key scaling constraints. While ${\sf PVODTE_{MH}}$ relies on a trusted setup issuing lightweight linear shares (256 bytes) \cite{abram2022}, ${\sf PVODTE_{MS}}$ acts over a client-provided HSS ciphertext causing a specific overhead difference (3072 bytes per ciphertext versus 256 for shares). Crucially, aside from this initialization step, all other communication burdens—inputs and outputs—remain nearly identical between the two frameworks.

\noindent\textbf{Running time.}
The memory value size in $\sf PVODTE_{MH}$ is twice that of $\sf PVODTE_{MS}$, leading to reduced efficiency when the cloud server executes HSS instructions. 
Server-side computation comprises two key steps: executing the $\sf SIC$ algorithm for $m=2^h-1$ non-leaf (decision) nodes and the $\sf VRG$ algorithm for $k=2^h$ leaf nodes. 
Comparison results are summarized in TABLE \ref{tab:vsmhss}. 
Compared with $\sf PVODTE_{MH}$, $\sf PVODTE_{MS}$ achieves a roughly $50\%$ reduction in running time for both the $\sf SIC$ and $\sf VRG$ algorithms.

\begin{table*}[t!]
  \renewcommand{\arraystretch}{1.2}

\caption{Client-side latency breakdown of $\sf ODTE_{SS}$, $\sf ODTE_{OS}$, $\sf ODTE_{SH}$ and $\sf PVODTE_{MS}$ 
  over different Datasets} 
   \label{tab:clientlatency}
   \centering
            \resizebox{1\textwidth}{!}{
     \begin{threeparttable}
     \begin{tabular}{ccccccccccc}
     \toprule
           \multirow{2}{*}{\textbf{Protocols}} & \multicolumn{2}{c}{\textsf{heart-disease}} & \multicolumn{2}{c}{\textsf{breast-cancer}} & \multicolumn{2}{c}{\textsf{housing}} &\multicolumn{2}{c}{\textsf{spambase}} & \multicolumn{2}{c}{\textsf{MINIST}} \\ 
           \cmidrule(lr){2-3}\cmidrule(lr){4-5}\cmidrule(lr){6-7}\cmidrule(lr){8-9}\cmidrule(lr){10-11}
          & C$\rightharpoonup  $S  & C$\leftharpoondown $S &
           C$\rightharpoonup  $S  & C$\leftharpoondown $S &
            C$\rightharpoonup  $S  & C$\leftharpoondown $S &
        C$\rightharpoonup  $S  & C$\leftharpoondown $S &
           C$\rightharpoonup  $S & C$\leftharpoondown $S \\
          \midrule $\sf ODTE_{SS}$ \cite{zheng2022} &  
          \textbf{5.47E--05} & {3.49E--05}  & 
          \textbf{3.79E--05} & {1.12E--03} & 
          \textbf{5.47E--05} & {3.57E--02} & 
          \textbf{2.40E--04} & {5.71E--01} & 
          \textbf{3.30E--03} & {4.57E+00} \\ 
           $\sf ODTE_{OS}$ \cite{zheng2023} &  
          \textbf{5.47E--05} & \textbf{2.18E--06}  & 
          \textbf{3.79E--05} & \textbf{2.18E--06} & 
          \textbf{5.47E--05} & \textbf{2.18E--06} & 
          \textbf{2.40E--04} & \textbf{2.18E--06} & 
          \textbf{3.30E--03} & \textbf{2.18E--06} \\ 
            $\sf ODTE_{SH}$ & 
           {5.66E--02} & 3.49E--05  & 
           {2.06E+00} & {1.12E--03} & 
           {6.62E+01} & {3.57E--02} & 
           {1.06E+03} & {5.71E--01} &
           {8.47E+03} & {4.57E+00} \\    
           $\sf PVODTE_{MS}$ & 
           {5.66E--02} & {5.23E--05}  &
           {2.06E+00} & {1.67E--03} &
           {6.62E+01} & {5.36E--02} & 
           {1.06E+03} & {8.57E--01} & 
           {8.47E+03} & {6.86E+00} \\            
         \bottomrule
     \end{tabular}
            
     Remark. 
     All latency values are reported in seconds (s). 
     We measure ``C$\rightharpoonup  $S'' and ``C$\leftharpoondown $S'' using the global median upload/download speeds (58 Mbps for upload, 112 Mbps for download) as reported by Ookla \cite{ookla2025}. 
  \end{threeparttable}
            }
\vspace{1mm}
  \caption{Online running time of $\sf ODTE_{SS}$, $\sf ODTE_{OS}$, $\sf ODTE_{SH}$ and $\sf PVODTE_{MS}$ 
  over different Datasets and Networks} 
   \label{tab:serverlatency}
   \centering
     \begin{threeparttable}
     \begin{tabular}{ccccccccccc}
     \toprule
           \multirow{2}{*}{\textbf{Protocols}} & \multicolumn{2}{c}{\textsf{heart-disease}} & \multicolumn{2}{c}{\textsf{breast-cancer}} & \multicolumn{2}{c}{\textsf{housing}} &\multicolumn{2}{c}{\textsf{spambase}} & \multicolumn{2}{c}{\textsf{MINIST}} \\ 
           \cmidrule(lr){2-3}\cmidrule(lr){4-5}\cmidrule(lr){6-7}\cmidrule(lr){8-9}\cmidrule(lr){10-11}
           & LAN.T & WAN.T& LAN.T & WAN.T& LAN.T & WAN.T& LAN.T & WAN.T& LAN.T & WAN.T  \\
          \midrule $\sf ODTE_{SS}$ \cite{zheng2022} &  
          \textbf{2.61E--02}  & {4.08E+01} & 
          {4.17E--01} & {6.50E+02} & 
          {1.34E+01} & {2.08E+04} & 
          {2.16E+02} & {3.33E+05} & 
          {2.0E+03} & {2.66E+06} \\ 
           $\sf ODTE_{OS}$ \cite{zheng2023} &  
           {8.30E--02}  & {3.44E+00} & 
           \textbf{1.02E--01} & {6.66E+01} & 
           \textbf{1.80E+00} & {2.54E+03} & 
           \textbf{3.12E+01} & {4.59E+04} & 
           \textbf{2.71E+02} & {3.98E+05} \\ 
            $\sf ODTE_{SH}$ & 
           2.25E+00  & \textbf{2.25E+00} & 
           {3.63E+01} & \textbf{3.63E+01} & 
           {1.16E+03} & \textbf{1.16E+03} & 
           {1.86E+04} & \textbf{1.86E+04} & 
           {1.48E+05} & \textbf{1.48E+05} \\    
           $\sf PVODTE_{MS}$ & 
           2.33E+00  & {2.33E+00} & 
           {3.76E+01} & {3.76E+01} & 
           {1.20E+03} & {1.20E+03} & 
           {1.92E+04} & {1.92E+04} & 
           {1.54E+05} & {1.54E+05} \\     
           \textsf{IF}\textsubscript{1} & 
           /  & 17.51 & 
           / & 17.29 & 
           / & 17.33 & 
           / & 17.34 & 
           / & 17.27 \\       
           \textsf{IF}\textsubscript{2} & 
           /  & 1.48 & 
           / & 1.77 & 
           / & 2.12 & 
           / & 2.39 & 
           / & 2.58 \\         
         \bottomrule

     \end{tabular}
     Remark. {LAN.T} and {WAN.T} represent online running time under LAN (RTT: 0.1ms), and online running time under WAN (RTT: 160ms), respectively. All running times are reported in seconds (s). The minimum value in each column is bolded. \textsf{IF}\textsubscript{1} represents the improvement factor of $\sf PVODTE_{MS}$ over $\sf ODTE_{SS}$ (i.e. $\sf ODTE_{SS}$/$\sf PVODTE_{MS}$) and \textsf{IF}\textsubscript{2} represents the improvement factor of $\sf PVODTE_{MS}$ over $\sf ODTE_{OS}$, (i.e. $\sf ODTE_{OS}$/$\sf PVODTE_{MS}$). 
  \end{threeparttable}
\vspace{-4mm} 
\end{table*}

 \begin{table}
      \caption{Overall latency of $\sf ODTE_{SS}$, $\sf ODTE_{OS}$, $\sf ODTE_{SH}$ and $\sf PVODTE_{MS}$ 
  over different Datasets in WANs} 
   \label{tab:overalllatency}
   \centering
            \resizebox{1\columnwidth}{!}{
     \begin{threeparttable}
     \begin{tabular}{cccccc}
     \toprule
           \textbf{Protocols} & \textsf{heart-disease} & \textsf{breast-cancer} & \textsf{housing} & \textsf{spambase} & \textsf{MINIST} \\ 
          \midrule $\sf ODTE_{SS}$ [1] &  {4.08E+01} & {6.50E+02} & 
           {2.08E+04} & {3.33E+05} & {2.66E+06} \\ 
           $\sf ODTE_{OS}$ [2] &  
           {3.44E+00} & 
            {6.66E+01} & 
           {2.54E+03} & 
           {4.59E+04} & 
           {3.98E+05} \\ 
            $\sf ODTE_{SH}$ & 
           \textbf{2.31E+00} 
           & \textbf{3.84E+01}  
           & \textbf{1.23E+03}  
           & \textbf{1.97E+04}  
           & \textbf{1.56E+05} \\    
           $\sf PVODTE_{MS}$ & 
           {2.39E+00} & {3.97E+01} & {1.27E+03} & {2.03E+04} & 
           {1.62E+05} \\     
           \textsf{IF}\textsubscript{1} & 
            17.07 & 
           16.37 & 
           16.38 & 
           16.40 & 
           16.42 \\       
           \textsf{IF}\textsubscript{2} & 
           1.44 & 
           1.68 & 
           2.00 & 
           2.26 & 
           2.47 \\         
         \bottomrule

     \end{tabular}
            
     Remark. The overall latency is computed as the sum of client-server communication latency (``C$\rightharpoonup$S'' and ``C$\leftharpoonup$S'' in TABLE \ref{tab:clientlatency}) and the server-side online computation time (``WAN.T'' in TABLE \ref{tab:serverlatency}) in WAN settings.
     All running times are reported in seconds (s). The minimum value in each column is bolded. \textsf{IF}\textsubscript{1} represents the improvement factor of $\sf PVODTE_{MS}$ over $\sf ODTE_{SS}$ (i.e. $\sf ODTE_{SS}$/$\sf PVODTE_{MS}$) and \textsf{IF}\textsubscript{2} represents the improvement factor of $\sf PVODTE_{MS}$ over $\sf ODTE_{OS}$, (i.e. $\sf ODTE_{OS}$/$\sf PVODTE_{MS}$). 
  \end{threeparttable}
            }
            \vspace{-7mm}
 \end{table}

\begin{table*}[t!]
  \renewcommand{\arraystretch}{1.2}
  \caption{Online Communication overhead of $\sf ODTE_{SS}$, $\sf ODTE_{OS}$, $\sf ODTE_{SH}$ and $\sf PVODTE_{MS}$ 
  over different Datasets} 
   \label{tab:online communication}
   \centering
     \begin{threeparttable}
     \begin{tabular}{ccccccccccc}
     \toprule
           \multirow{2}{*}{\textbf{Protocols}} & \multicolumn{2}{c}{\textsf{heart-disease}} & \multicolumn{2}{c}{\textsf{breast-cancer}} & \multicolumn{2}{c}{\textsf{housing}} &\multicolumn{2}{c}{\textsf{spambase}} & \multicolumn{2}{c}{\textsf{MINIST}} \\ 
           \cmidrule(lr){2-3}\cmidrule(lr){4-5}\cmidrule(lr){6-7}\cmidrule(lr){8-9}\cmidrule(lr){10-11}
          & \#Client. & \#Server. & \#Client. & \#Server. & \#Client. & \#Server. & \#Client. & \#Server. & \#Client. & \#Server.   \\
          \midrule $\sf ODTE_{SS}$ \cite{zheng2022} &  
          9.06E--01  & {3.40E+02} & 
          {1.63E+01} & {1.23E+04} & 
          {5.12E+02} & {3.96E+05} & 
          {8.19E+03} & {6.71E+06} & 
          {6.56E+04} & {1.01E+08} \\ 
           $\sf ODTE_{OS}$ \cite{zheng2023} &  
           \textbf{2.19E--01}  & {2.45E+02} & 
           \textbf{1.55E--01} & {8.24E+03} & 
           \textbf{2.19E--01} & {2.92E+05} & 
           \textbf{9.06E--01} & {9.03E+06} & 
           \textbf{1.23E+01} & {6.44E+08} \\ 
           $\sf ODTE_{SH}$ &  
           2.10E+02  & \textbf{0} & 
           {7.66E+03} & \textbf{0} & 
           {2.46E+05} & \textbf{0} & 
           {3.94E+06} & \textbf{0} & 
           {3.15E+07} & \textbf{0} \\ 
           $\sf PVODTE_{MS}$ & 
           2.13E+02  & \textbf{0} & 
           {7.67E+03} & \textbf{0} & 
           {2.46E+05} & \textbf{0} & 
           {3.94E+06} & \textbf{0} & 
           {3.15E+07} & \textbf{0} \\       
         \bottomrule
     \end{tabular}
     Remark. \#Client. and \#Server. represent the number of bits required for online computation respectively. All communications are 
     reported in KB. The minimum value in each column is bolded.
  \end{threeparttable}
  \vspace{-4mm}
  \end{table*}

\subsection{Comparison with $\sf ODTE_{SS}$ and $\sf ODTE_{OS}$}
Our protocol is designed to enhance real-time efficiency 
during the critical online phase of decision tree evaluation. 
By optimizing this phase, 
we address the most urgent performance bottlenecks in practical applications. 
The key metrics are the overall time taken for one inference, which includes both 
client-server communication latency and server-side computation time, 
and the online communication overhead, which includes both client-side and server-side communication. 
In this section, we present experimental results focusing on these key metrics.

\noindent\textbf{Client-side communication latency.}
We evaluate client-side communication latency for ODTE protocols by estimating the time for data upload (sending private inputs to the servers) and data download (receiving the inference result). As shown in TABLE \ref{tab:clientlatency}, $\sf PVODTE_{MS}$ does not have an advantage in upload time compared with $\sf ODTE_{SS}$ and $\sf ODTE_{OS}$. Both $\sf PVODTE_{MS}$ and $\sf ODTE_{SH}$ require the client to upload larger ciphertexts than $\sf ODTE_{SS}$ and $\sf ODTE_{OS}$. For download time, $\sf ODTE_{SS}$, $\sf ODTE_{SH}$, and $\sf PVODTE_{MS}$ have similar latency since they all return all path costs and leaf-node values to the client, whereas $\sf ODTE_{OS}$ returns only the class label and therefore achieves the lowest latency. Although $\sf PVODTE_{MS}$ incurs higher client-side communication latency than $\sf ODTE_{SS}$ and $\sf ODTE_{OS}$, this cost is small relative to server-side computation, as shown in TABLE \ref{tab:serverlatency} and TABLE \ref{tab:overalllatency}.

\noindent\textbf{Server-side running time.} 
The server side dominates the end-to-end latency in ODTE protocols. TABLE \ref{tab:serverlatency} reports the online running time and contrasts our protocol $\sf PVODTE_{MS}$ with $\sf ODTE_{SS}$ \cite{zheng2022} and $\sf ODTE_{OS}$ \cite{zheng2023}. While $\sf ODTE_{SS}$ and $\sf ODTE_{OS}$ achieve lower running time in LAN settings, $\sf PVODTE_{MS}$ substantially outperforms both in WAN environments. The primary reason is that $\sf ODTE_{SS}$ and $\sf ODTE_{OS}$ rely on frequent joint computations at each decision and leaf node, which amplifies synchronization and network latency. In contrast, $\sf PVODTE_{MS}$ employs efficient local evaluation, which is robust to network latency. Concretely, in WAN settings our protocol is approximately $17\times$ faster than $\sf ODTE_{SS}$ and $1.5\mbox{-}2.6\times$ faster than $\sf ODTE_{OS}$.

\noindent\textbf{Overall latency.} 
As summarized in TABLE \ref{tab:overalllatency}, $\sf PVODTE_{MS}$ achieves significant overall latency reductions over $\sf ODTE_{SS}$ and $\sf ODTE_{OS}$ in WAN settings. On average, it is about $16.4\times$ faster than $\sf ODTE_{SS}$ and $1.5\mbox{-}2.5\times$ faster than $\sf ODTE_{OS}$. These gains primarily stem from the server-side design choices that mitigate synchronization in high-latency networks.

\noindent\textbf{Online communication overhead.} 
We estimate the total online communication for our maliciously secure ODTE protocol by accounting for client-server traffic and server-to-server traffic. Client-side communication includes ciphertexts uploaded by the client and results returned. Server-side communication captures messages exchanged between servers during computation. TABLE \ref{tab:online communication} compares our protocol with the two-server ODTE baselines $\sf ODTE_{SS}$ and $\sf ODTE_{OS}$. $\sf ODTE_{SS}$ and $\sf ODTE_{OS}$ incur lower client-side online communication than ours because secret-sharing ciphertexts are smaller than the HSS ciphertexts used by $\sf PVODTE_{MS}$. However, our protocol requires zero server-to-server communication due to the absence of S2S rounds, while $\sf ODTE_{OS}$ requires up to 6.4E+08 KB. Consequently, as shown in Fig.~\ref{fig:overall-online}, the total online communication of $\sf PVODTE_{MS}$ is $1.6\mbox{-}3.1\times$ lower than $\sf ODTE_{SS}$ and $1.1\mbox{-}20\times$ lower than $\sf ODTE_{OS}$.

As the decision-tree depth exceeds $h=13$, the overall communication advantage of $\sf PVODTE_{MS}$ becomes more pronounced relative to $\sf ODTE_{SS}$ and $\sf ODTE_{OS}$. The improvement arises because our protocol eliminates S2S rounds, which grow exponentially with depth in $\sf ODTE_{SS}$ and $\sf ODTE_{OS}$. Therefore, for deeper trees, our protocol scales more favorably and offers superior communication efficiency in large-scale deployments.

\subsection{Overhead Analysis of Malicious Security in $\sf PVODTE$} 
To quantify the overhead introduced by malicious security, we first compare the performance of our semi-honest ($\sf ODTE_{SH}$) and maliciously secure ($\sf PVODTE_{MS}$) protocols.

\noindent\textbf{Communication overhead.} 
The client sends an HSS ciphertext of a MAC key $A$, which is around $3$ KB to both servers. Compared with the encrypted feature vector (around 210 KB) sent to the servers, the cost for verification is small. 
Each server sends $3k$ linear shares ($48k$ bytes) to the client for verification, while $2k$ linear shares ($32k$ bytes) in $\sf ODTE_{SH}$. The additional cost for verification is $16k$ bytes.
As shown in TABLE~\ref{tab:online communication},
the overhead taken by malicious security in $\sf PVODTE_{MS}$ is negligible compared with the overall communication overhead of $\sf ODTE_{SH}$.

\noindent\textbf{Computation overhead.}
In our experiments, the $\sf HSS.Input$ and $\sf HSS.Mul$ each take around 81 ms. 
The client requires an additional 81 ms to encrypt the MAC key $A$ via $\sf HSS.Input$, plus an extremely lightweight verification step (0.001 ms) that multiplies two plaintext values to verify MAC tags.
Each server performs $k$ instances of $\sf HSS.Mul$ (81 ms per instance) to compute MAC tags at leaf nodes, totaling $81k$ ms of extra computation. As shown in TABLE ~\ref{tab:serverlatency}, the server-side running time of $\sf PVODTE_{MS}$ remains roughly on par with $\sf ODTE_{SH}$. This is because the dominant overhead of our protocol, comparing feature values at non-leaf nodes is identical across both security models.

\subsection{WAN sensitivity}
To further analyze the performance improvements, we conducted experiments across WAN settings (RTT 40 to 200 ms).
Figure \ref{fig:WAN} confirms that $\sf PVODTE_{MS}$ surpasses $\sf ODTE_{SS}$ and $\sf ODTE_{OS}$ once RTT $>$ 80ms, distinguishing it as superior for latency-critical applications.

It's worth mentioning that by implementing our protocol with high-performance programming techniques and deploying it on more advanced servers will surely further reduce the evaluation time for deeper decision trees significantly. 
This is because the servers in our protocol execute all computations locally,
thereby circumventing the adverse impacts of online communication. 
In contrast, the protocols in \cite{zheng2022,zheng2023} face execution time bottlenecks due to bandwidth constraints, network delays, and the need to handle extensive S2S messages.

\begin{figure*}[t]
  \centering
  
  {\footnotesize
  \tikz[baseline=-0.5ex]{\draw[mark=pentagon*,color=seagreen,mark options={scale=1.0},thick] plot coordinates {(0,0) (0.3,0)};} $\sf ODTE_{SS}$ [1] \quad
  \tikz[baseline=-0.5ex]{\draw[mark=diamond*,color=darkorange,mark options={scale=1.0},thick] plot coordinates {(0,0) (0.3,0)};} $\sf ODTE_{OS}$ [2] \quad
  \tikz[baseline=-0.5ex]{\draw[mark=square*,color=brandeisblue,mark options={scale=1.0},thick] plot coordinates {(0,0) (0.3,0)};} $\sf PVODTE_{MS}$
  }
  \vspace{-2mm}
  
  \subfloat[\textsf{heart-disease}]{
    \label{fig:WAN11}
    \centering
  \begin{tikzpicture}[scale=0.52]
    \begin{axis}[%
    symbolic x coords={40,80,120,160,200},
    scaled ticks=false, 
    height=4.8cm,
    width=7.5cm,
    grid=major,
    grid style=dashed,
    ymode=log,
    log basis y={10},
    ymin=0.7,
    ymax=2E+02,
    ylabel={Running time (s)},
    xlabel={WAN RTT (ms)},
    xlabel style={font=\large},
    ylabel style={font=\large},
    tick label style={font=\large},
    ]
    
    \addplot[mark=pentagon*,color=seagreen,mark options={scale=1.2},thick] 
    coordinates {(40,10.20064939)(80,20.40064939)(120,30.60064939)(160,40.80064939)(200,51.00064939)};

    \addplot[mark=diamond*,color=darkorange,mark options={scale=1.2},thick] 
    coordinates {(40,0.92085288)(80,1.76085288)(120,2.60085288)(160,3.44085288)(200,4.28085288)};
    
    \addplot[mark=square*,color=brandeisblue,mark options={scale=1.2},thick] 
    coordinates {(40,2.3332)(80,2.3332)(120,2.3332)(160,2.3332)(200,2.3332)};
    \end{axis}
    \end{tikzpicture}%
      }%
\hspace{-1mm}%
\subfloat[\textsf{breast-cancer}]{
  \centering
  \label{fig:WAN2}
  \begin{tikzpicture}[scale=0.52]
    \begin{axis}[%
    symbolic x coords={40,80,120,160,200},
    scaled ticks=false, 
    height=4.8cm,
    width=7.5cm,
    grid=major,
    grid style=dashed,
    ymode=log,
    log basis y={10},
    ymin=5E+00,
    ymax=2E+03,
    ylabel={},
    xlabel={WAN RTT (ms)},
    xlabel style={font=\large},
    ylabel style={font=\large},
    tick label style={font=\large},
    ]
    \addplot[mark=pentagon*,color=seagreen,mark options={scale=1.2},thick] 
    coordinates {(40,162.6106578)(80,325.2106578)(120,487.8106578)(160,650.4106578)(200,813.0106578)};

    \addplot[mark=diamond*,color=darkorange,mark options={scale=1.2},thick] 
    coordinates {(40,16.69990128)(80,33.33990128)(120,49.97990128)(160,66.61990128)(200,83.25990128)};
    
    \addplot[mark=square*,color=brandeisblue,mark options={scale=1.2},thick] 
    coordinates {(40,37.56)(80,37.56)(120,37.56)(160,37.56)(200,37.56)};
    \end{axis}
    \end{tikzpicture}%
}%
\hspace{-1mm}%
\subfloat[\textsf{housing}]{
  \centering
  \label{fig:WAN3}
  \begin{tikzpicture}[scale=0.52]
    \begin{axis}[%
    symbolic x coords={40,80,120,160,200},
    scaled ticks=false, 
    height=4.8cm,
    width=7.5cm,
    grid=major,
    grid style=dashed,
    ymode=log,
    log basis y={10},
    ymin=3E+2,
    ymax=2E+05,
    ylabel={},
    xlabel={WAN RTT (ms)},
    xlabel style={font=\large},
    ylabel style={font=\large},
    tick label style={font=\large},
    ]
    
    \addplot[mark=pentagon*,color=seagreen,mark options={scale=1.2},thick] 
    coordinates {(40,5202.313852)(80,10404.27385)(120,15606.23385)(160,20808.19385)(200,26010.15385)};

    \addplot[mark=diamond*,color=darkorange,mark options={scale=1.2},thick] 
    coordinates {(40,635.0936074)(80,1269.973607)(120,1904.853607)(160,2539.733607)(200,3174.613607)};
    
    \addplot[mark=square*,color=brandeisblue,mark options={scale=1.2},thick] 
    coordinates {(40,1201.9696)(80,1201.9696)(120,1201.9696)(160,1201.9696)(200,1201.9696)};
    \end{axis}
    \end{tikzpicture}%
}%
\hspace{-1mm}%
\subfloat[\textsf{spambase}]{
  \centering
 \label{fig:WAN4}
  \begin{tikzpicture}[scale=0.52]
    \begin{axis}[%
    symbolic x coords={40,80,120,160,200},
    scaled ticks=false, 
    height=4.8cm,
    width=7.5cm,
    grid=major,
    grid style=dashed,
    ymode=log,
    log basis y={10},
    ymin=5E+3,
    ymax=2E+06,
    xlabel={WAN RTT (ms)},
    xlabel style={font=\large},
    ylabel style={font=\large},
    tick label style={font=\large},
    ]
    
    \addplot[mark=pentagon*,color=seagreen,mark options={scale=1.2},thick] 
    coordinates {(40,83238.67085)(80,166469.4308)(120,249700.1908)(160,332930.9508)(200,416161.7108)};

    \addplot[mark=diamond*,color=darkorange,mark options={scale=1.2},thick] 
    coordinates {(40,11471.30413)(80,24940.10413)(120,34408.90413)(160,45877.70413)(200,57346.50413)};
    
    \addplot[mark=square*,color=brandeisblue,mark options={scale=1.2},thick] 
    coordinates {(40,19231.5376)(80,19231.5376)(120,19231.5376)(160,19231.5376)(200,19231.5376)};
    \end{axis}
    \end{tikzpicture}%
}%
\hspace{-1mm}%
  \subfloat[\textsf{MNIST}]{
  \centering
  \label{fig:WAN5}
  \begin{tikzpicture}[scale=0.52]
    \begin{axis}[%
    symbolic x coords={40,80,120,160,200},
    scaled ticks=false, 
    height=4.8cm,
    width=7.5cm,
    grid=major,
    grid style=dashed,
    ymode=log,
    log basis y={10},
    ymin=5E+4,
    ymax=2E+07,
    ylabel={},
    xlabel={WAN RTT (ms)},
    xlabel style={font=\large},
    ylabel style={font=\large},
    tick label style={font=\large},
    ]
    
    \addplot[mark=pentagon*,color=seagreen,mark options={scale=1.2},thick] 
    coordinates {(40,666206.3895)(80,1332052.189)(120,1997897.989)(160,2663743.789)(200,3329589.589)};

    \addplot[mark=diamond*,color=darkorange,mark options={scale=1.2},thick] 
    coordinates {(40,99636.92228)(80,199251.6423)(120,298866.3623)(160,398481.0823)(200,498095.8023)};
    
    \addplot[mark=square*,color=brandeisblue,mark options={scale=1.2},thick] 
    coordinates {(40,153852.312)(80,153852.312)(120,153852.312)(160,153852.312)(200,153852.312)};
    \end{axis}
    \end{tikzpicture}%
}
    
       \caption{Online running time in different WAN settings for different databases. $y$-axis is in the logarithm scale.}
       \label{fig:WAN}
       \vspace{-4mm}
  \end{figure*}
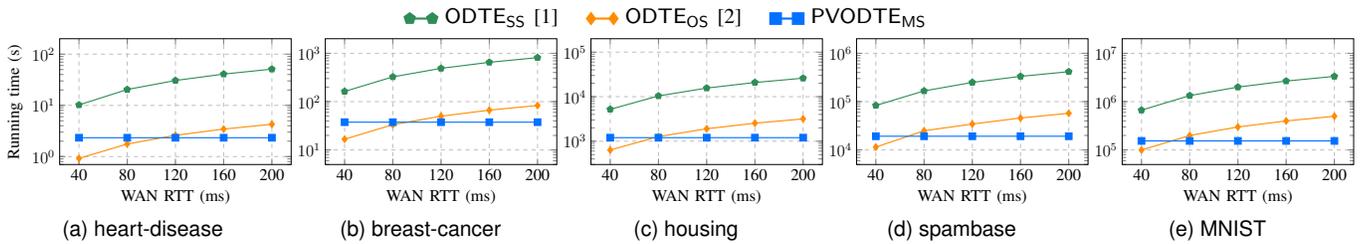

\begin{table}[t]
  \centering
  \renewcommand{\arraystretch}{1.2}
  \caption{
    One-time offline download cost for encrypted matrix $\C_{\cal M}$.
  }\label{tab:cm}
  \resizebox{1\columnwidth}{!}{
  \begin{threeparttable}
  \begin{tabular}{c||c|c|c}
    \hline
    Decision Tree & \#$\mathsf{C}_M$ (KB) & Construction time (s) & Download time (s) \\
    \hline
    \textsf{heart-disease} & 2.73E+02 & 4.55E+00 & 1.90E--02  \\
    \textsf{breast-cancer} & 6.89E+03 & 1.15E+02 & 4.80E--01  \\
    \textsf{housing} & 3.19E+05 & 5.32E+03 & 2.23E+01  \\
    \textsf{spambase} & 2.24E+07 & 3.74E+05 & 1.56E+03  \\ 
    \textsf{MNIST}* & 2.47E+09 & 4.11E+07 & 1.72E+05  \\
    \hline
  \end{tabular}
    $^*$\textsf{MNIST} is included primarily for benchmarking and is not a typical decision-tree dataset in practice.
    Download time is estimated using the global median download speed (112 Mbps) reported by Ookla~\cite{ookla2025}.
  \end{threeparttable}
  }
  \vspace{-4mm}
\end{table}

\subsection{Offline Cost Analysis}
\label{sec:offline cost}
In addition to the online phase, our protocol includes an offline phase in which the client performs a one-time download of the encrypted matrix $\C_{\cal M}$ (Algorithm~\ref{alg:sfs}). Although this cost is reusable and amortizable across multiple inferences, we report it to provide a complete evaluation. The size of $\C_{\cal M}$ is $3mn$ KB, where $m$ is the number of non-leaf nodes and $n$ is the number of features. As shown in TABLE~\ref{tab:cm}, the offline communication cost is practical for all datasets except \textsf{MNIST} at depth 20, which we include solely as a stress-test benchmark rather than a representative real-world use case. 

An important observation is that $\C_{\cal M}$ encodes only the tree topology (i.e., which feature is tested at each decision node), not the actual threshold values or classification labels. Consequently, if the model is updated by adjusting thresholds or labels while maintaining the tree structure, $\C_{\cal M}$ requires no redistribution. Even when the tree structure itself is modified, only the affected rows of $\C_{\cal M}$ (one per modified non-leaf node) need to be recomputed and transmitted, significantly reducing the update cost compared to full redistribution.

\section{Conclusion}
\label{sec:conc}

In this paper, we propose a novel two-server ODTE protocol 
$\sf PVODTE$,  
that eliminates the requirement for S2S rounds  
and achieves security against malicious servers simultaneously.
To our knowledge, this is the first ODTE protocol that achieve all these properties. 
By enabling each server to independently process secret-shared data, 
our protocol avoids communication and computation overhead typically associated with bandwidth constraints and network delays in WAN environments. 
Moreover, despite its enhanced security and efficiency benefits, our proof-of-concept implementation shows performance that remains comparable to existing protocols.

\bibliographystyle{IEEEtran}

\renewcommand{\appendixname}{Supplementary Material}
\appendix
\label{sec:eliminate-offline}
Recall that in the \textsf{SFS} algorithm, the client downloads the encrypted mapping matrix $\C_{\mathcal{M}}$ once and reuses it to locally compute $\C_{\mathcal{M}\mathbf{x}}$ via the additive homomorphism of \textsf{HSS}. The key reason the client must hold $\C_{\mathcal{M}}$ locally is that the computation exploits the client's private input $\mathbf{x}$: since $x_{s,i}$ is a known plaintext bit, multiplying $\C_{\mathcal{M}_{j,s}}$ by $x_{s,i}$ reduces to a conditional selection—either including or excluding $\C_{\mathcal{M}_{j,s}}$ from the summation without requiring any ciphertext-ciphertext multiplication. If, however, we wish to delegate this computation entirely to the servers so that the client need only upload her encrypted feature vector without downloading $\C_{\mathcal{M}}$, the servers must evaluate the product of two ciphertexts, namely $\C_{\mathcal{M}_{j,s}}$ and $\C_{x_{s,i}}$.

This operation is not supported by the group-based \textsf{HSS} schemes~\cite{abram2022,orlandi2021,boyle2016}, whose underlying encryption scheme is linearly homomorphic over a group and therefore admits no ciphertext-ciphertext multiplication. 
In contrast, an \textsf{HSS} instantiation based on the RLWE assumption~\cite{boyle2019} employs a \emph{somewhat} homomorphic encryption (SHE) scheme as its underlying primitive, which supports one level of ciphertext-ciphertext multiplication. Under such an instantiation, the client encrypts her feature vector $\mathbf{x}$ bit by bit as $\{\C_{x_{s,i}}\}$ via \textsf{HSS.Input} for $s \in [n], i \in [t]$ and uploads the result to the servers; the model provider sends $\C_{\mathcal{M}}$ directly to the servers rather than making it publicly downloadable. 
Each server then computes:
\begin{displaymath}
\C_{(\mathcal{M}\mathbf{x})_{j,i}} \leftarrow \sum_{s=1}^{n} {\sf CMult}(\C_{\mathcal{M}_{j,s}},\C_{x_{s,i}}),
\end{displaymath}
for every $j\in [m], i\in [t]$, where ${\sf CMult}$ denotes the ciphertext-ciphertext multiplication supported by the underlying SHE scheme, and the summation denotes repeated application of $\sf Add$. The servers thereby obtain $\C_{\mathcal{M}\mathbf{x}}$ and proceed with the remainder of the protocol as before.

The trade-off, however, is non-trivial. In RLWE-based \textsf{SHE}, each ciphertext-ciphertext multiplication increases the multiplicative depth by one, which causes the ciphertext size to at least double. Consequently, the server-side storage and computation costs at least double compared with the group-based instantiation. 
Furthermore, unlike the original scheme—where the client performs only lightweight local operations using plaintext $\mathbf{x}$ and the downloaded $\C_{\mathcal{M}}$—the client must now encrypt $\mathbf{x}$ under the more expensive RLWE-based \textsf{HSS} for every inference. 
Consequently, this variant is particularly well-suited for deployments characterized by: (i) a large and diverse client population where each client issues only a small number of queries, making amortization of the one-time downloading cost ineffective; (ii) resource-constrained or stateless clients (e.g., IoT devices, browser-based applications) that cannot maintain persistent local storage.

\end{document}